\documentclass[final]{amsart}
 
\usepackage{microtype}

\title{A Monadic Framework for Interactive Realizability}
\author[1]{Giovanni Birolo}

\usepackage{mdframed} 
\usepackage{enumitem} 
\usepackage{cleveref} 
\usepackage{amsthm} 

\usepackage{txfonts}
\usepackage{ifthen}

\newcommand\phantomrel[1]{\mathrel{\phantom{#1}}} 
\newcommand\reducesto\to

\newcommand\N{\mathbb{N}}

\newcommand\num[1]{\boldsymbol{#1}} 
\newcommand\tarrow\to 

\newcommand\seq\vdash 
\newcommand\monSeq\Vdash 
\newcommand\mon\mathfrak 

\newcommand\ifNotEmpty[2]{\ifthenelse{\equal{#1}{}}{}{#2}}
\newcommand\optionalSubscript[1]{\ifthenelse{\equal{#1}{}}{}{_{#1}}}
\newcommand\optionalSuperscript[1]{\ifthenelse{\equal{#1}{}}{}{^{#1}}}

\newcommand\MonadStyle\mathfrak


\newcommand\Unit{\mathrm{Unit}} 
\newcommand\Bool{\mathrm{Bool}} 
\newcommand\Nat{\mathrm{Nat}} 
 
\newcommand\State{\mathrm{State}} 
\newcommand\Ex{\mathrm{Ex}} 

\newcommand\abstr[2]{\lambda #1^{#2}.}

\newcommand\abstrS[1]{\abstr{#1}\State}
\DeclareMathOperator\unit\ast

\newcommand\newtermconstant[3]{\newcommand{#1}[1][]{#2
\ifthenelse{\equal{##1}{}}{}{^{##1}}%
\ifthenelse{\equal{#3}{}}{}{_\textrm{#3}}}%
}
\newcommand\newtermconstantN[2]{\newcommand{#1}[2][]{#2
\ifthenelse{\equal{##1}{}}{}{^{##1}}%
\ifthenelse{\equal{##2}{}}{}{_{##2}}}%
}

\newcommand\termname[1]{\operatorname{\textsf{\small#1}}}
\newtermconstant\pair{\termname{pair}}{}
\newtermconstant\prr{\termname{pr}}{R}
\newtermconstant\prl{\termname{pr}}{L}
\newtermconstant\inr{\termname{in}}{R}
\newtermconstant\inl{\termname{in}}{L}
\newtermconstant\case{\termname{case}}{}
\newtermconstantN\totRec{\termname{crec}}
\newtermconstantN\monStarN{\operatorname{\mon{star}}}
\newtermconstantN\monRaiseN{\operatorname{\mon{raise}}}
\newtermconstant\monBind{\operatorname{\mon{bind}}}{}
\newcommand\exmerge{\termname{merge}}
\newcommand\eval{\termname{eval}}
\newcommand\query{\termname{query}}

\newtermconstant\reg{\termname{reg}}{}
\newtermconstant\exc{\termname{ex}}{}
\newtermconstant\ite{\termname{ite}}{}
\newtermconstant\true{\termname{true}}{}
\newtermconstant\false{\termname{false}}{}
\newtermconstant\dummy{\termname{dummy}}{}
\newtermconstant\zero{\termname{zero}}{}
\let\succ\undefined
\newtermconstant\succ{\termname{succ}}{}

\newcommand\interp[1]{\llbracket #1 \rrbracket}


%
%
%

\usepackage{bussproofs}
\usepackage{txfonts}
\usepackage{ifthen}
\usepackage{tikz}
\usetikzlibrary{patterns}
\usetikzlibrary{decorations.pathmorphing}

\providecommand\optionalSubscript[1]{\ifthenelse{\equal{#1}{}}{}{_{#1}}}
\providecommand\optionalSuperscript[1]{\ifthenelse{\equal{#1}{}}{}{^{#1}}}

\newcommand\fa{A}
\newcommand\fb{B}
\newcommand\fc{C}

\newcommand\ltrue\top
\newcommand\lfalse\bot

\newcommand\limply\to 
\newcommand\subst[2]{[ #1 \coloneqq #2 ]}
\newcommand\quant[2]{#1 #2.\ }
\newcommand\qlambda[1]{\quant\lambda{#1}}
\newcommand\qforall{\quant\forall}
\newcommand\qexists{\quant\exists}

\tikzstyle{label} = [fill=white,inner sep=1pt]
\newcommand\IfNotEmpty[2]{\ifthenelse{\equal{#1}{}}{}{#2}}

\newcommand\PrDer[3]{
\begin{tikzpicture}[baseline=0.3cm,scale=1]
  \IfNotEmpty{#3}{ 
    \draw decorate [decoration={snake,amplitude=2pt,segment length=14pt}] {(0,0.2) --  node[above,label] {\(#3\)} (0,1.25)};
  }
  \draw (-0.5,1) -- (-0.1,0) -- (0.1,0) -- (0.5,1);
  \IfNotEmpty{#1}{ 
    \node[above,label] () at (0,0.2) {\(#1\)}; 
  } 
    
\end{tikzpicture}
}


\newcommand\PrAx[2][]{\AxiomC{\(#2\optionalSuperscript{#1}\)}}
\newcommand\PrUn[2][]{\UnaryInfC{\(#2\optionalSuperscript{#1}\)}}
\newcommand\PrBin[2][]{\BinaryInfC{\(#2\optionalSuperscript{#1}\)}}
\newcommand\PrTri[2][]{\TrinaryInfC{\(#2\optionalSuperscript{#1}\)}}
\newcommand\PrLbl[2][]{\LeftLabel{\(#2\)}\ifthenelse{\equal{#1}{}}{}{\RightLabel{\(#1\)}}}

\newcommand\PrInf[1][]{\ifthenelse{\equal{#1}{}}{%
\def\extraVskip{-2pt}\noLine\UnaryInfC\vdots\noLine\def\extraVskip{2pt}}{%
\noLine\UnaryInfC{\(#1\)}}}
\newcommand\RuleName[3][]{#2\mathrm{#3}\IfNotEmpty{#1}{_\mathrm{#1}}}
\newcommand\RuleNameI[2][]{\RuleName[#1]{#2}{I}}
\newcommand\RuleNameE[2][]{\RuleName[#1]{#2}{E}}



\usepackage{comment} 
\usepackage{ifdraft}
\usepackage{xcolor}

\newcommand\newmarkedenvironment[2]{%
  \newenvironment{#1}{\noindent\textcolor{red}{*** BEGIN: {#2} ***}\\}{\textcolor{red}{\\ *** END: {#2} ***}}
}

\ifoptiondraft{
  \newmarkedenvironment{wip}{Work in progress}
  \newmarkedenvironment{omitted}{Omitted}
}{
  \excludecomment{wip}
  \excludecomment{omitted}
}

\ifoptionfinal{
  \includecomment{changed}
  \newcommand\fixme[1]{}
  \newcommand\note[1]{}
}{
  \newmarkedenvironment{changed}{Changed from last version}
  \newcommand\fixme[1]{\textcolor{red}{ (X)}\marginpar{\textcolor{red}{#1}}}
  \newcommand\note[1]{\textcolor{green}{ (X)}\marginpar{\textcolor{green}{#1}}}
}

\theoremstyle{plain}
\newtheorem{theorem}{Theorem} 
\newtheorem{lemma}{Lemma} 
 
\newtheorem{definition}{Definition} 
\newtheorem{example}{Example} 
\newtheorem{remark}{Remark}

\crefname{property}{property}{properties}
\crefname{lstlisting}{program}{programs}

\newenvironment{legend}{\small \it}{}

\newcommand\HA{\mathsf{HA}}
\newcommand\EM{\mathsf{EM}_1}
\newcommand\PRA{\mathsf{PRA}}

\newcommand\lva{x}
\newcommand\lvb{y}
\newcommand\lvc{z}
\newcommand\lta{t}

\newcommand\lafa{P}

\newcommand\MM{\mon M}
\newcommand\IdM{\mon{Id}}
\newcommand\ExM{\mon{Ex}}
\newcommand\IR{\mon{IR}}

\def\monUnit{}
\def\monStar{}
\def\monMerge{}
\newcommand\setmonad[1]{%
  \def\mm##1{\lVert ##1 \rVert\ifthenelse{\equal{#1}{}}{}{_{#1}}}%
  \def\m##1{\lvert ##1 \rvert\ifthenelse{\equal{#1}{}}{}{_{#1}}}%
  \def\monTrans##1{\llbracket ##1 \rrbracket\optionalSubscript{#1}}%
  \def\monRe{\mathrel{\mon R}\ifthenelse{\equal{#1}{}}{}{_{#1}}}%
  \def\re{\mathrel{\mathtt R}\ifthenelse{\equal{#1}{}}{}{_{#1}}}%
  \def\T{T\optionalSubscript{#1}}%
  \renewcommand\monUnit[1][]{\operatorname{\mon{unit}}\optionalSubscript{#1}\optionalSuperscript{##1}}%
  \renewcommand\monStar[1][]{\operatorname{\mon{star}}\optionalSubscript{#1}\optionalSuperscript{##1}}%
  \renewcommand\monMerge[1][]{\operatorname{\mon{merge}}\optionalSubscript{#1}\optionalSuperscript{##1}}%
  \def\monSeq{\Vdash\optionalSubscript{#1}}
}

\begin{document}

\maketitle

\begin{abstract}
We give a new presentation of interactive realizability with a more explicit syntax. 

Interactive realizability is a realizability semantics that extends the Curry-Howard correspondence to (sub-)classical logic, more precisely to first-order intuitionistic arithmetic (Heyting Arithmetic) extended by the law of the excluded middle restricted to simply existential formulas, a system motivated by its applications in proof mining. 

Monads can be used to structure functional programs by providing a clean and modular way to include impure features in purely functional languages. 
We express interactive realizers by means of an abstract framework that applies the monadic approach used in functional programming to modified realizability, in order to obtain more ``relaxed'' realizability notions that are suitable to classical logic. In particular we use a combination of the state and exception monads in order to capture the learning-from-mistakes nature of interactive realizers at the syntactic level.
\end{abstract}

\section{Introduction}




\begin{omitted}
While constructive and in particular intuitionistic proofs have a computational content by design, this is not the case for proofs in classical logic. 
In general it is not possible to give a computational content to any classically provable statement: for instance, 
giving a computational interpretation of the classical law of the excluded middle that satisfies the BHK interpretation means being able to effectively decide for any statement whether it holds or not, which leads to contradictions in the case of many statements. 
On the other hand saying that classical logic has no computational content at all is unreasonable: 
for instance, given a classically provable statement we can say that its computational content is that of an intuitionistic proof of its double-negation translation. 
The problem with this approach is that the double-negation translation 
transforms informative statements into negative, non-informative ones.
Thus, relating the original statement with the computational content of its translation is not usually straightforward. 
\end{omitted}

The Curry-Howard correspondence was originally discovered for intuitionistic proofs. 
This is not coincidental: 
the type systems needed to interpret intuitionistic proofs are usually very simple and natural, as in the case of Heyting Arithmetic and System T (see \cite{girard88}). 
While classical proofs can be transformed into intuitionistic ones by means of the double-negation translation and then translated into typed programs, 
the existence of a direct correspondence was deemed unlikely until Griffin showed otherwise in \cite{griffin90}. 

Starting with Griffin's, other interpretations extending the Curry-Howard correspondence to classical logic have been put forward. 
Griffin uses a ``typed Idealized Scheme'' with the control construct {\sf call/cc}, that allows access to the current continuation. 
In \cite{parigot92}, Parigot introduces the \(\lambda\mu\)-calculus, an extension of lambda calculus with an additional kind of variables for subterms. 
In \cite{krivine94}, Krivine uses lambda calculus with a non-standard semantics, described by an abstract machine that allows the manipulation of ``stacks'', which can be thought of as execution contexts. 

All these different approaches seems to suggest that, in order to interpret classical logic, we need control operators or some syntactical equivalent thereof. 
This could be generalized in the idea that ``impure'' computational constructs are needed in order to interpret non-constructive proofs. 
Monads are a concept from category theory that has been widely used in computer science. 
In particular, they can be used to structure functional programs that mimic the effects of impure features. 

In \cite{moggi91}, Moggi advocates the use of monads as a framework to describe and study many different ``notions of computation'' in the context of categorical semantics of programming languages. 
A different take on the same idea that actually eschews category theory completely is suggested in \cite{wadler92} by Wadler: 
the definition of monad becomes purely syntactic and is used as a framework to structure functional programs by providing a clean and modular way to include impure features in purely functional languages (one noteworthy example is I/O in Haskell). 

The main idea of this work is to use monads as suggested by Wadler in order to structure programs extracted from classical proofs by interactive realizability. 
Recently introduced by Berardi and de'Liguoro in \cite{berardidL08,berardidL09}, 
interactive realizability is yet another technique for understanding and extracting the computational content in the case of the sub-classical logic \(\HA+\EM\) (Heyting Arithmetic extended by the law of the excluded middle restricted to \(\Sigma^0_1\) formulas).
Interactive realizability combines Coquand's game theoretic semantics for classical arithmetic \cite{coquand95} and Gold's idea of learning in the limit \cite{gold65}. 

A program extracted by means interactive realizability, called interactive realizer, can be thought of as a learning process. 
It accumulate information in a knowledge state and use this knowledge in order to ``decide'' the instances of \(\EM\) used in the proof. 
Since these instances are in general undecidable, the realizer actually makes an ``educated guess'' about which side of an \(\EM\) instance is true by looking at the state. 
Such guesses can be wrong. 

This can become apparent later in the proof, when the guessed side of the \(\EM\) instance is used to deduce some decidable statement. 
If this decidable statement turns out to be false, then the guess was wrong and the proof cannot be completed. 
In this case the realizer cannot produce the evidence required for the final statement and fails. 
However failure is due to the fact that we made a wrong guess. 
We can add this information to the state, so that, using this new state, we will be able to guess the \(\EM\) instance correctly. 
At this point we discard the computation that occurred after the wrong guess and we resume from there. 
This time we guess correctly and can proceed until the end or until we fail again because we guessed incorrectly another \(\EM\) instance. 

There are three ``impure'' parts in the behavior we described: the dependency on the knowledge state, the possibility of failure to produce the intended result and the backtracking after the failure.
In this work we use a monadic approach to describe the first two parts which are peculiar to interactive realizability. 
We do not describe the third part, which is common also to the other interpretations of classical logic. 

\begin{omitted}
Interactive realizability is based on the idea of learning by trial and error. 
More precisely: the computation of an interactive realizer depends on a finite knowledge state. 
If the knowledge state contains enough information, the realizer, computed on such state, will behave according to BHK semantics and produce the intended result. 
If the state is too small the realizer will fail and yield the missing piece of knowledge instead. 
Thus we have the following learning process: we begin by computing a realizer starting from the empty state. 
If the realizer succeeds we are done, otherwise we fail and add the piece of information obtained from the failure of the realizer to the state and compute the realizer on the updated state. 
We carry on this procedure until the realizer succeeds. 
The main computational properties of interactive realizers are thus exceptions (since they can fail) and dependency on a state. 
\end{omitted}


This paper is an account of interactive realizability where interactive realizers are encoded as \(\lambda\)-terms following the monadic approach to structuring functional programs suggested by Wadler. 
We shall prove that our presentation of interactive realizability is a sound semantics for \(\HA+\EM\). 


In our presentation, interactive realizer are written in a simply typed \(\lambda\)-calculus with products, coproducts and natural numbers with course-of-value recursion, extended with some abstract terms to represent states and exceptions. 
The peculiar features of interactive realizability, namely the dependency on the knowledge state and the possibility of failure, are explicitly computed by the \(\lambda\)-terms encoding the realizers. 
Thus the computational behavior of interactive realizers is evident at the syntactic level, without the need for special semantics. 

While proving the soundness of \(\HA+\EM\) with respect to our definition of interactive realizability, we observed that the soundness of \(\HA\) did not require any assumption on the specific monad we chose to structure interactive realizers (while the soundness of \(\EM\) requires them as expected). 
Prompted by this observation, we split the presentation in two parts. 

The former is an abstract monadic framework for producing realizability notions where the realizers are written in monadic style. 
We prove that \(\HA\) is sound with respect to any realizability semantics defined by the framework, for any monad. 
The latter is an application of this abstract framework to interactive realizability. 
We define the specific monad we use to structure interactive realizers and show that,  
by means of this specific monad, we can realize the \(\EM\) axiom.


This work builds on the presentation of interactive realizability given in \cite{aschieriB10} by Aschieri and Berardi. 
The main contributions with respect to \cite{aschieriB10} is a more precise description of the computational behavior of interactive realizer. 
This is explained in more detail at the end of the paper. 

Monads have first been used to describe interactive realizability by Berardi and de'Liguoro in \cite{berardidL10} and \cite{berardidL11}, where interactive realizers for \(\PRA+\EM\) are given a monadic categorical semantics following Moggi's approach. 
While our idea of using monad to describe interactive realizability was inspired by \cite{berardidL10}, our work is mostly unrelated: our use of monads follows Wadler's syntactical approach and we employ a different monad that emphasizes different aspects of interactive realizability.

\section{A Simply Typed $\lambda$-Calculus for Realizability}

\newcommand\ta{X}
\newcommand\tb{Y}
\newcommand\tc{Z}
\newcommand\tta{x} 
\newcommand\ttb{y} 
\newcommand\ttc{z} 
\newcommand\tva{x} 
\newcommand\tvb{y} 
\newcommand\tvc{z} 
\newcommand\ra{p}
\newcommand\rb{q}
\newcommand\rr{r}

In this section we introduce system \(T'\), a simply typed \(\lambda\)-calculus variant of G\"odel's system \(T\) in which we shall write our realizers.
System \(T'\) will be more convenient for our purposes in order to get a more straightforward translation of monads and related concepts from category theory. 
There are two main differences between our system \(T'\) and system \(T\). 
The first one is that we replace the boolean type with the more general sum (or co-product) type. 
The second one is that the recursion operator uses complete recursion instead of standard primitive recursion. 

We begin by defining the types. 
We shall use the metavariables \(\ta,\tb\) and \(\tc\) for types. 
We assume that we have a finite set of atomic types that includes the unit type \(\Unit\) and the type of natural numbers \(\Nat\). 
Moreover we have three type binary type constructors \(\tarrow, \times, +\). 
In other words, for any types \(\ta\) and \(\tb\) we have the arrow (or function) type \(\ta \tarrow \tb\), the product type \(\ta \times \tb\) and the sum (or co-product) type \(\ta + \tb\). 

We can now define the typed terms of the calculus. 
We assume that we have a countable set of typed term constants that includes the constructors and the destructors for the unit, natural, product and sum types (listed in \cref{fig:constant_terms}) and  
a countable set of variables of type \(\ta\) for any type \(\ta\): 
\[ \tva_0 : \ta, \dotsc, \tva_n : \ta, \dotsc. \]
\begin{figure}
  \caption{Constructors and destructors}\label{fig:constant_terms} 
  \begin{gather*} 
    \unit : \Unit, \\ 
    \pair[\ta,\tb] : \ta \tarrow \tb \tarrow \ta \times \tb, \\ 
    \begin{aligned} 
      \prl[\ta,\tb] &: \ta \times \tb \tarrow \ta, \qquad 
      &\prr[\ta,\tb] : \ta \times \tb \tarrow \tb, \\ 
      \inl[\ta,\tb] &: \ta \tarrow \ta + \tb, \qquad
      &\inr[\ta,\tb] : \tb \tarrow \ta + \tb, 
    \end{aligned} \\ 
    \case[\ta,\tb,\tc] : \ta + \tb \tarrow (\ta \tarrow \tc) \tarrow (\tb \tarrow \tc) \tarrow \tc, \\ 
    \zero : \Nat, \qquad \succ : \Nat \tarrow \Nat, \\
    \totRec[\tc]{n}  : (\Nat \tarrow (\Nat \tarrow \tc) \tarrow \tc) \tarrow \Nat \tarrow \tc. 
  \end{gather*} 
  \begin{legend}
    where \(n\) is a natural number or the symbol \(\infty\). 
    In order we have the constant constructor of type \(\Unit\), the constructor and the two destructors of the product types, the two constructors and the destructor of the sum types and the two constructors and the destructor of the natural type.
    Most of those are actually ``parametric polymorphic'' terms, that is, families of constants indexed by the types \(\ta,\tb\) and \(\tc\).
  \end{legend}
\end{figure}
We use the metavariables \(\tta, \ttb, \ttc\) for terms. 
Moreover for any two terms \(\tta : \ta\) and \(\ttb : \ta \tarrow \tb\) we have a term \(\ttb \tta : \tb\) and for any variable \(\tva : \ta\) and term \(\ttb : \tb\) we have a term \(\qlambda\tva \ttb : \ta \tarrow \tb\). 


In order to avoid a parenthesis overflow, we shall follow the usual conventions for writing terms and types. 
For terms this means that application and abstraction are respectively left and right-associative 
and that abstraction binds as many terms as possible on its right; 
for types it means that \(\times\) and \(+\) are left-associative and associate more closely than \(\tarrow\), which is right-associative. 
We also omit outer parenthesis. 
For example: 
\[ 
  \begin{array}{ccc} 
    \ta \tarrow \tb \tarrow \ta \times \tb \times \tc & \text{ stands for } & (\ta \tarrow (\tb \tarrow ((\ta \times \tb) \times \tc))), \\ 
    \abstr{x}\ta \abstr{y}\tb \abstr{z}\tc t_1 t_2 t_3 & \text{ stands for } & (\abstr\lva\ta (\abstr{y}\tb (\abstr{z}\tc ((t_1 t_2) t_3)))). 
  \end{array} 
\] 

We define the reductions for the terms of system \(T'\):
\begin{gather*} 
  (\abstr\lva\ta t) a \reducesto_\beta t\subst\lva{a}, \\ 
  \begin{aligned} 
    \prl[\ta,\tb] (\pair[\ta,\tb]{a}{b}) &\reducesto_\times a, 
    &\case[\ta,\tb,\tc] (\inl[\ta,\tb] a) f g &\reducesto_+ f a, \\
    \prr[\ta,\tb] (\pair[\ta,\tb]{a}{b}) &\reducesto_\times b, 
    &\case[\ta,\tb,\tc] (\inr[\ta,\tb] b) f g &\reducesto_+ g b, 
  \end{aligned} \\
  \totRec[\tc]{n} h \num m \reducesto_R \begin{cases} 
    h \num m (\totRec[\tc]{m} h) &\text{if } m < n \text{ or } n = \infty, \\ 
    \dummy[\tc] &\text{otherwise,} 
  \end{cases} 
\end{gather*} 
where \( a : \ta\), \(b : \tb\), \(c : \tc\), \(f : \ta \tarrow \tc\), \(g : \tb \tarrow \tc \) and \( h : \Nat \tarrow (\Nat \tarrow \tc ) \tarrow \tc \). 
Note that we use \(c\) as a dummy term of type \(\tc\)\footnotemark. 
\footnotetext{
  As long as the base types are inhabited, we can define an arbitrary dummy term \(\dummy[\ta]\) for any type \(\ta\):
  \begin{gather*}
    \dummy[\Unit] \equiv \unit, \qquad \dummy[\Nat] \equiv \num 0, \\ 
    \dummy[\ta \tarrow \tb] \equiv \abstr\_\ta \dummy[\tb], \quad
    \dummy[\ta \times \tb] \equiv \pair \dummy[\ta] \dummy[\tb], \quad 
    \dummy[\ta + \tb] \equiv \inl \dummy[\ta]. 
  \end{gather*}
}

  We explain the reduction given for \(\totRec{}\), since it is not the standard one. The difference is due to the fact that \(\totRec{}\) is meant to realize complete induction instead of standard induction. 
In complete induction, the inductive hypothesis holds not only for the immediate predecessor of the value we are considering, but also  for all the smaller values. 

Similarly, \(\totRec{}\) allows us to recursively define a function \(f\) where the value of \(f(\num m)\) depends not only on the value of \(f(\num{m-1})\) but also on the value of \(f(\num l)\), for any \(l < m \). 
Thus, when computing \(\totRec[\tc]{n} h \num m\), instead of taking the value of \(\totRec[\tc]{n} h \num{(m-1)}\) as an argument, \(h\) takes the whole function \(\totRec[\tc]{n} h\). 
In order to avoid unbounded recursion, we add a guard \(n\) that prevents \(\totRec[\tc]{n} h\) to be computed on arguments greater or equal to \(n\). 
More precisely \(\totRec[\tc]{n} h \num m \) only reduces to \(h \num m (\totRec[\tc]{m} h) \) if \( m < n \); thus, even if \(h\) requires \(\totRec[\tc]{m} h\) to be computed on many values, the height of the computation trees is bound by \(m\)\footnote{Unlike in standard primitive recursion, where the computation always comprises \(m\) steps, in course-of-values primitive recursion the computation can actually be shorter if \(h\) ``skips'' values.}. 
Naturally, a ``good'' \(h\) should not evaluate \(\totRec[\tc]{m} h\) on values bigger than \(m\), but in any case the guard guarantees termination.
The symbol \(\infty\) acts as a dummy guard, which gets replaced with an effective one when \(\totRec[\tc]{\infty} h\) is evaluated the first time. 
 
 

\begin{wip}
  We can easily generalize our recursion operator to work on arbitrary well-founded types. 
  Assume that \(<_\ta : \ta \tarrow \ta \tarrow \Bool\) is a well-founded ordering on an inhabited type \(\ta\) and choose any term \(0_\ta : \ta\) as a default value. 
  \begin{align*} 
    \totRec[\ta,\tb]\lva : (\ta \tarrow (\ta \tarrow \tb) \tarrow \tb) \tarrow \ta \tarrow \tb \\ 
    \totRec[\ta,\tb]\infty h \lva \reducesto \ite <_\ta 
    \totRec[\ta,\tb]\lva h \lvb \reducesto \ite (<_\ta \lvb \lva) (h \lvb (\totRec[\ta,\tb]\lvb) (0_\ta)
  \end{align*} 
  where \( h : \ta \tarrow \tb \). 
\end{wip}

System \(T'\) shares most of the good properties of G\"odel's system \(T\), in particular confluence, strong normalization\footnotemark and a normal form property. 
\footnotetext{Strong normalization is a consequence of the explicit bound on recursion given by the subscript in the recursion constant.} 

\section{Monadic Realizability} \label[section]{sec:monadic_realizability}
\setmonad\MM

This section contains the abstract part of our work. 
We describe the abstract framework of monadic realizability and show the soundness of \(\HA\) with respect to the semantics induced by a generic monad.


We state the properties that a suitable relation must satisfy in order to be called a monadic realizability relation and 
we show how such a relation induces a (monadic) realizability semantics.
Then we describe the proof decoration procedure to extract monadic realizers from proofs in \(\HA\). 
Here we are only concerned with proofs in \(\HA\), for a non-trivial example of a monadic realizability notion 
see interactive realizability in \cref{sec:monadic_interactive_realizability}. 

We start by introducing a syntactic translation of the concept of monad from category theory. 
Informally, a monad is an operator \(\T\) ``extending'' a type, with a canonical embedding from \(\ta\) to \(\T(\ta)\), a canonical way to lift\fixme{or extend?} a map from \(\ta\) to \(\T(\tb)\) to a map from \(\T(\ta)\) to \(\T(\tb)\), a canonical way of merging an element of \(\T(\ta)\) and an element of \(\T(\tb)\) into an element of \(\T(\ta \times \tb)\). We also requires some equations relating these canonical maps, equations which are often satisfied in the practice of programming. 
\begin{definition}[Syntactic Monad] \label[definition]{def:syntactic_monad} A \emph{syntactic monad} \(\MM\) is a tuple \( (\T\), \(\monUnit\), \(\monStar\), \(\monMerge) \) where \(\T\) is a type constructor, that is, a map from types to types, and, for any types \( \ta, \tb \), 
  \( \monUnit, \monStar \) and \( \monMerge \) are families (indexed by \(\ta\) and \(\tb\)) of closed terms: 
  \begin{align*} 
    \monUnit[\ta] &: \ta \tarrow \T\ta, \\ 
    \monStar[\ta,\tb] &: (\ta \tarrow \T\tb) \tarrow (\T\ta \tarrow \T\tb), \\ 
    \monMerge[\ta,\tb] &: \T\ta \tarrow \T\tb \tarrow \T(\ta \times \tb), 
  \end{align*} 
  satisfying the following properties: 
  \begin{align}
    \tag{M1} \label[property]{mon1} 
    \monStar[\ta,\ta] \monUnit[\ta] \mon\tta &\leadsto \mon\tta, \\ 
    \tag{M2} \label[property]{mon2} 
    \monStar[\ta,\tb] f (\monUnit[\ta] \tta) &\leadsto f \tta, \\ 
    \tag{M3} \label[property]{mon_merge1} 
    \monMerge[\ta,\tb] (\monUnit[\ta] \tta) (\monUnit[\ta] \ttb) &\leadsto \monUnit[\ta\times\tb] (\pair[\ta,\tb]\tta\ttb), 
  \end{align}
  for any \( \mon\tta : \T\ta \), \( f : \ta \tarrow \T\tb \), \( g : \tb \tarrow \T\tc \), \( \tta : \ta \) and \( \tta : \tb \). 
\end{definition} 
The terms \(\monUnit\) and \(\monStar\) and \cref{mon1,mon2} are a straightforward translation of the definition of Kleisli tripe in category theory, an equivalent way to describe a monad\footnotemark. 
\footnotetext{
  This part of the definition follows the one given by Wadler in \cite{wadler92}, with the difference that we replace the term \(\monBind\) with \(\monStar\), where:
  \[ \monBind[\ta,\tb] : \T\ta \tarrow (\ta \tarrow \T\tb) \tarrow \T\tb. \]
  Defining \(\monStar\) and \(\monBind\) in terms of each other is straightforward: 
  \begin{align*}
    \monBind[\ta,\tb] &\equiv 
    \abstr{\mon\tta}{\T\ta} \abstr{f}{\ta \tarrow \T\tb} \monStar f \mon\tta, \\ 
    \monStar[\ta,\tb] &\equiv 
    \abstr{f}{\ta \tarrow \T\tb} \abstr{\mon\tta}{\T\ta} \monBind \mon\tta f. 
  \end{align*}
  The term \(\monStar\) corresponds directly to the operator \(\_^*\) in the definition of Kleisli triple. 
}

Term \(\monMerge\) and \cref{mon_merge1} are connected to the definition of strong monad: \(\monMerge\) is the syntactical counterpart of the natural transformation \(\phi\), induced by the tensorial strength of the monad (see \cite{moggi91} for details). 
While \(\phi\) satisfies several other properties in \cite{moggi91}, \cref{mon_merge1} is the only one we need for our treatment. \note{\(\phi\) may be related to commutative strong monad}

\begin{example} \label{ex:identity_monad1}
  \setmonad\IdM 
  The simplest example of syntactic monad is the \emph{identity monad} \(\IdM\), defined as:  
  \begin{align*}
    \T\ta &\equiv \ta, & 
    \monUnit[\ta] &\equiv \abstr\tta\ta \tta, \\
    \monStar[\ta,\tb] &\equiv \abstr{f}{\ta \tarrow \tb} f, & 
    \monMerge[\ta,\tb] &\equiv \pair[\ta,\tb]. 
  \end{align*}
  This monad cannot describe any additional computational property besides the value a term reduces to. 
\end{example}

A \emph{realizability relation} is a binary relation between terms and closed formulas. 
When a term and a formula are in such a relation we shall say that the term \emph{realizes} the formula or that the term is a \emph{realizer} of the formula. 
The intended meaning is that a realizer of a formula is the computational content of a proof of the formula. 

We proceed towards the definition of a family of realizability relations, which we call \emph{monadic realizability relations}.
Any monadic realizability relation is given with respect to some monad \(\MM\) and determines a particular notion of realizability where realizers have the computational properties described by the monad. 
In the rest of this section we shall assume that \(\MM = (\T, \monUnit, \monStar, \monMerge)\) denotes any fixed syntactic monad. 

We now define the type of the monadic realizers of a formula. 
The idea is to take the standard definition of the type of intuitionistic realizers of a formula \(\fa\) and to apply \(\T\) only to the type \(\ta\) of the whole formula \(\fa\) and to the types appearing in \(\ta\) after an arrow, 
namely the types of consequents \(\fc\) of implication sub-formulas \(\fb \limply \fc\) in \(\fa\) and the types of bodies \(\fb\) of universal quantified sub-formulas \(\qforall\lva \fb\) in \(\fa\). 
This is the standard call-by-value way to treat arrow types in a monadic framework explained in \cite{wadler90}. 

\begin{definition}[Types for Monadic Realizers] \label[definition]{def:mon_types}
  We define two mappings \(\mm\cdot\) and \(\m\cdot\) from formulas to types by simultaneous recursion. 
  The first is the outer or monadic typing of a formula \(\fa\): 
  \[ \mm\fa = \T\m\fa, \] 
  and the latter is the inner typing, defined by induction on the structure of \(\fa\): 
  \begin{align*} 
    \m\lafa & = \Unit, &
    \m{\fb \land \fc} & = \m\fb \times \m\fc, \\ 
    \m{\fb \lor \fc} & = \m\fb + \m\fc, &
    \m{\qexists\lva \fb} & = \Nat \times \m\fb, \\ 
    \m{\fb \limply \fc} & = \m\fb \tarrow \mm\fc, &
    \m{\qforall\lva \fb} & = \Nat \tarrow \mm\fb, 
  \end{align*} 
  where \(\lafa\) is an atomic formula and \(\fa\) and \(\fb\) are any formulas. 
\end{definition} 
We consider \(\lfalse\) to be atomic and \(\lnot \fa\) to be a notation for \(\fa \limply \lfalse\), so the types of their realizers follow from the previous definition. 

As we defined two types for each formula \(\fa\), each formula has two possible realizers, one of type \(\m\fa\) and one of type \(\mm\fa\). 
The former will follow the BHK interpretation like an ordinary intuitionistic realizer 
while the latter will be able to take advantage of the computational properties given by the syntactic monad \(\MM\). 
A formula (in particular classical principles) may have a realizer of monadic type but no realizer of inner type. 

We shall now state the requirements for a realizability relation to be a monadic realizability relation. 
A realizability relation is to be thought of as the restriction of the realizability semantics to closed formulas, 
that is, a relation between terms of \(T'\) and closed formulas which holds when a term is a realizer of the formula. 
Since a formula can have realizers of inner and outer type, in the following definition two realizability relations will appear: \(\re\) for realizers of inner type, whose definition is modeled after the BHK interpretation and \(\monRe\) for the realizers of outer type, which takes in consideration the computational properties of the monad \(\MM\). 

As a typographical convention we shall use the letters \(\rr\), \(\ra\) and \(\rb\) for terms of type \(\m\fa\). 
Similarly we shall use \(\mon\rr\), \(\mon\ra\) and \(\mon\rb\) for terms of type \(\mm\fa\). 
\begin{definition}[Monadic Realizability Relation] \label[definition]{def:monadic_realizability_relation}
  Let \(\monRe\) be a realizability relation between terms of type \(\mm\fa\) and closed formulas \(\fa\). 
  Let \(\re\) be another realizability relation between terms of type \(\m\fa\) and closed formulas \(\fa\), such that
  \begin{itemize}
    \item \label[property]{re_atomic} \( \rr \re \lafa \) iff 
      \( \rr \leadsto \unit\) and \(\lafa\) is true, 
    \item \label[property]{re_and} \( \rr \re \fb \land \fc \) iff 
      \(\prl \rr \re \fb \) and \(\prr r \re \fc \), 
    \item \label[property]{re_or} \( \rr \re \fb \lor \fc \) iff 
      \(\rr \leadsto \inl a \) and \( a \re \fb \) or 
      \(r \leadsto \inr b \) and \( b \re \fc \), 
    \item \label[property]{re_imply} \( \rr \re \fb \limply \fc \) iff 
      \( \rr \ra \monRe \fc \) for all \(\ra : \m\fb\) such that \( \ra \re \fb \), 
    \item \label[property]{re_forall} \( \rr \re \qforall{x} \fb \) iff 
      \(\rr \num n \monRe \fb\subst{x}{\num n} \) for all natural numbers \(n\), 
    \item \label[property]{re_exists} \( \rr \re \qexists{x} \fb \) iff 
      \(\prr \rr \re \fb\subst{x}{\prl \rr} \), 
  \end{itemize} 
  where \(\lafa\) is a closed atomic formula and \(\fb\) and \(\fc\) are generic formulas. 
  We consider \(\lfalse\) a closed atomic formula which is never true (for instance \(0=1\)).
  We shall say that the pair \( (\monRe, \re) \) is a \emph{monadic realizability relation} if the following properties are satisfied: 
  \begin{enumerate}[label=MR\arabic*]
    \item \label[property]{real1} 
      if \( \rr \re \fa \) then \( \monUnit \rr \monRe \fa \), 
    \item \label[property]{real2} 
      if \( \rr \re \fb \limply \fc \)
      then \( \monStar \rr \mon\ra \monRe \fc \) for all \( \mon\ra : \mm\fb \) such that \( \mon\ra \monRe \fb \), 
    \item \label[property]{real3} 
      if \( \mon\ra \monRe \fb \) and \( \mon\rb \monRe \fc \) 
      then \( \monMerge \mon\ra \mon\rb \monRe \fb \land \fc \). 
  \end{enumerate} 
  We will say that a term \(\rr\) (resp. \(\mon\rr\)) is an \emph{inner} (resp. \emph{outer} or \emph{monadic}) \emph{realizer} of a formula \(\fa\) if \(\rr : \m\fa\) (resp. \(\rr : \mm\fa\)) and \(\rr \re \fa\) (resp. \(\mon\rr \monRe \fa\)). 
\end{definition}
When defining a concrete monadic realizability relation, it is often convenient to define \(\monRe\) in terms of \(\re\) too, that is, the two relations will be defined by simultaneous recursion in terms of each other. 

Note how the properties of the relation \(\re\) resemble the clauses the definition of standard modified realizability. 
The main difference is that in the functional cases, those of implication and universal quantification, \(\re\) is not defined in terms of itself but uses \(\monRe\). 
This makes apparent our claim that the behavior of inner realizers is closely related to the BHK interpretation. 

\Cref{real1} is a constraint on the relationship between \(\monRe\) and \(\re\). 
It requires \(\monUnit\) to transform inner realizers into monadic realizers, 
which can be thought as the fact that realizers satisfying the BHK interpretation are acceptable monadic realizers. 
\Cref{real2} again links \(\re\) and \(\monRe\), this time through \(\monStar\). 
It says that, if we have a term that maps inner realizers into monadic realizers, its lifting by means of \(\monStar\) maps monadic realizers into monadic realizers. 
\Cref{real3} is a compatibility condition between \(\monMerge\) and \(\monRe\). 
These conditions are all we shall need in order to show that any monadic realizability relation determines a sound semantics for \(\HA\). 
Later we shall see how particular instances of monadic realizability 
can produce a sound semantics for more than just \(\HA\). 

\begin{example}
  \setmonad\IdM
  We continue our example with the identity monad \(\IdM\) by defining a monadic realizability relation. 
  We define \(\monRe\) and \(\re\) by simultaneous recursion, with \(\re\) defined in terms of \(\monRe\) as in \cref{def:monadic_realizability_relation} and \(\monRe\) defined as \(\re\), which makes sense since \( \mm\fa = \m\fa\). 
\end{example}

We can now define the monadic realizability semantics for a given monadic realizability relation, that is, we say when a realizer validates a sequent where a formula can be open and depend on assumptions. 
In order to do this we need a notation for a formula in a context, which we call \emph{decorated sequent}. 
A decorated sequent has the form \( \Gamma \monSeq \rr : \fa \) where \(\fa\) is a formula, \(\rr\) is a term of type \(\mm\fa\) and 
\(\Gamma\) is the context, namely, a list of assumptions written as \( \alpha_1 : \fa_1, \dotsc \alpha_k : \fa_k \) where \(\fa_1, \dotsc, \fa_k\) are formulas and \(\alpha_1, \dotsc, \alpha_k\) are proof variables that label each assumption, that is, they are variables of type \( \m{\fa_1}, \dotsc, \m{\fa_k} \). 
As we did with the syntactic monad \(\MM\), in the following we shall assume to be working with a fixed generic monadic realizability relation \(\monRe\). 
\begin{definition}[Monadic Realizability Semantics] \label[definition]{def:mon_sem}
  Consider a decorated sequent: 
  \[ \alpha_1 : \fa_1, \dotsc, \alpha_k : \fa_k \monSeq \mon\rr : \fb, \] 
  such that the free variables of \(\fb\) are \( x_1, \dotsc, x_l \) and the free variables of \(\mon\rr\) are either in \( x_1, \dotsc, x_l \) or in \( \alpha_1, \dotsc, \alpha_k \). 
  We say that the sequent is valid if and only if 
  for all natural numbers \(n_1, \dotsc, n_l\) and for all inner realizers \( \ra_1 : \m{\fa_1}, \dotsc, \ra_k : \m{\fa_k} \) such that 
  \[ \ra_1 \re \fa_1[x_1 \coloneqq \num{n}_1, \dotsc, x_l \coloneqq \num{n}_l] \qquad \dotso \qquad \ra_k \re \fa_k[x_1 \coloneqq \num{n}_1, \dotsc, x_l \coloneqq \num{n}_l], \] 
  we have that 
  \[ \mon\rr[x_1 \coloneqq \num{n}_1, \dotsc, x_l \coloneqq \num{n}_l, \alpha_1 \coloneqq \ra_1, \dotsc, \alpha_k \coloneqq \ra_k] \monRe A[x_1 \coloneqq \num{n}_1, \dotsc, x_l \coloneqq \num{n}_l]. \] 
\end{definition} 

\begin{example}\setmonad\IdM
  From \cref{def:mon_sem}, it follows that the semantics induced by the monadic realizability relation \(\monRe\) is exactly the standard semantics of modified realizability.  
\end{example}

Now that we have defined our semantics, we can illustrate the method to extract monadic realizers from proofs in \(\HA\). Later we shall show how to extend our proof extraction technique to \(\HA+\EM\). 
Since proof in \(\HA\) are constructive, the monadic realizers obtained from them behave much like their counterparts in standard modified realizability and comply with the BHK interpretation. 
In \cref{sec:monadic_interactive_realizability} we shall show how to extend the proof decoration to non constructive proofs by exhibiting a monadic realizer of \(\EM\) that truly takes advantage of monadic realizability since it does not act accordingly to the BHK interpretation. 

In order to build monadic realizers of proofs in \(\HA\) 
we need a generalization of \(\monStar\) that works for functions of more than one argument. 
We can build it using \(\monMerge\) to pack realizers together. 
Thus let
\begin{equation*} 
  \monStarN[\ta_1,\dotsc,\ta_k,\tb]{k} : (\ta_1 \tarrow \dotsb \tarrow \ta_k \tarrow \T\tb) \tarrow (\T\ta_1 \tarrow \dotsb \tarrow \T\ta_k \tarrow \T\tb), 
\end{equation*}
be a family of terms defined by induction on \( k \ge 0 \): 
\begin{gather*}
  \monStarN[\tb]{0}  \equiv \abstr{f}{\T\tb} f, \qquad 
  \monStarN[\ta,\tb]{1}  \equiv \monStar[\ta,\tb], \\ 
  \monStarN{k+2} \equiv \abstr{f}{\ta_1 \tarrow \dotsb \tarrow \ta_{k+2} \tarrow \T\tb} \abstr{x}{\T\ta_1} \abstr{y}{\T\ta_2} \monStarN{k+1} (\abstr{z}{\ta_1 \times \ta_2} f (\prl z) (\prr z) ) (\monMerge x y). 
\end{gather*} 
For instance: 
\[ \monStarN{2} \equiv \abstr{f}{\ta \tarrow \tb \tarrow \T\tc} \abstr{x}{\T\ta} \abstr{y}{\T\tb} \monStar (\abstr{z}{\ta \times \tb} f (\prl z) (\prr z)) (\monMerge x y) \] 
\begin{omitted}
  An alternative definition of \(\monStarN{k}\) that does not use \(\monMerge\): 
  \[ \monStarN{k} \equiv \abstr{f}{\ta_1 \tarrow \dotsb \tarrow \ta_k \tarrow \T\tb}
    \abstr{\mon\tta_1}{\T\ta_1} \dotsc \abstr{\mon\tta_k}{\T\ta_k} 
  \monStar (\abstr{\tta_1}\ta \monStarN{k-1} (f \tta_1) \mon\tta_2 \dotsm \mon\tta_k ) \mon \tta_1\] 
\end{omitted}

Moreover we shall need to ``raise'' the return value of a term \( f : \ta_1 \tarrow \dotsb \tarrow \ta_k \tarrow \tb \) with \(\monUnit\) before we apply \(\monStarN{k}\).
We define the family of terms \(\monRaiseN{k}\) by means of \(\monStarN{k}\), for any \(k \ge 0\):
\begin{align*}
  \monRaiseN{k} &: (\ta_1 \tarrow \dotsb \tarrow \ta_k \tarrow \tb) \tarrow (\T\ta_1 \tarrow \dotsb \tarrow \T\ta_k \tarrow \T\tb) \\ 
  \monRaiseN{k} &\equiv \abstr{f}{\ta_1 \tarrow \dotsb \tarrow \ta_k \tarrow \tc} \monStarN{k} 
  (\abstr{x_1}{\ta_1} \dotsm \abstr{x_k}{\ta_k} \monUnit (f x_1 \dotsm x_k)), 
\end{align*}

Now we can show how to extract a monadic realizer from a proof in \(\HA\). 
Let \(\mathcal{D}\) be a derivation of some formula \(\fa\) in \(\HA\), that is, a derivation ending with \( \Gamma \seq \fa \). 
We produce a decorated derivation by replacing each rule instance in \(\mathcal{D}\) with the suitable instance of the decorated version of the same rule given in \cref{fig:monadic_decorated_rules}. 
These decorated rules differ from the previous version in that they replace sequents with decorated sequents, that is, they bind a term to each formula, 
where the term bound to the conclusion of a rule is build from the terms bound to the premises. 
Thus we have defined a term by structural induction on the derivation: if the conclusion of the decorated derivation is \( \Gamma \monSeq \mon\rr : \fa \) then we set \( \mathcal{D}^* \equiv \mon\rr \). 

\begin{figure}[!ht]
  \begin{mdframed}
  \caption{\(\HA\) rules, decorated with monadic realizers.} \label{fig:monadic_decorated_rules}
  \begin{gather*}
      \renewcommand\arraystretch{2}
      \newcommand\wline[1]{\multicolumn{2}{c}{#1} \\ }
      \PrAx{} 
      \PrLbl{\text{Id}} 
      \PrUn{\Gamma \monSeq \monRaiseN{0} \lva : \fa} 
      \DisplayProof \qquad 
      \PrAx{\Gamma \monSeq \mon\rr_1 : \lafa_1} 
      \PrAx\dotso 
      \PrAx{\Gamma \monSeq \mon\rr_l : \lafa_l} 
      \PrLbl{\text{Atm}}
      \PrTri{\Gamma \monSeq \monRaiseN{l} (\abstr{\gamma_1}\Unit \dotsm \abstr{\gamma_l}\Unit \unit) \mon\rr_1 \dotsm \mon\rr_l : \lafa} 
      \DisplayProof \\ 
        \PrAx{\Gamma \monSeq \mon\rr_1 : \fa} 
        \PrAx{\Gamma \monSeq \mon\rr_2 : \fb} 
        \PrLbl{\RuleNameI\land} 
        \PrBin{\Gamma \monSeq \monRaiseN{2} \pair \mon\rr_1 \mon\rr_2 : \fa \land \fb} 
        \DisplayProof \\
        \PrAx{\Gamma \monSeq \mon\rr : \fa \land \fb} 
        \PrLbl{\RuleName[L]\land{E}}
        \PrUn{\Gamma \monSeq \monRaiseN{1} \prl \mon\rr : \fa} 
        \DisplayProof \qquad 
        \PrAx{\Gamma \monSeq \mon\rr : \fa \land \fb} 
        \PrLbl{\RuleName[R]\land{E}}
        \PrUn{\Gamma \monSeq \monRaiseN{1} \prr \mon\rr : \fb} 
        \DisplayProof \\ 
      \PrAx{\Gamma \monSeq \mon\rr_1 : \fa} 
      \PrLbl{\RuleName[R]\lor{I}} 
      \PrUn{\Gamma \monSeq \monRaiseN{1} \inl \mon\rr_1 : \fa \lor \fb} 
      \DisplayProof \qquad 
      \PrAx{\Gamma \monSeq \mon\rr_2 : \fb} 
      \PrLbl{\RuleName[L]\lor{I}} 
      \PrUn{\Gamma \monSeq \monRaiseN{1} \inr \mon\rr_2 : \fa \lor \fb} 
      \DisplayProof \\ 
        \PrAx{\Gamma \monSeq \mon\rr : \fa \lor \fb } 
        \PrAx{\Gamma, \alpha_{k+1} : \fa \monSeq \mon\ra : \fc} 
        \PrAx{\Gamma, \alpha_{k+1} : \fb \monSeq \mon\rb : \fc} 
        \PrLbl{\RuleNameE\lor} 
        \PrTri{\Gamma \monSeq \monStarN{1} (\abstr\gamma{\m\fa + \m\fb} \case \gamma (\abstr{\alpha_{k+1}}{\m\fa} \mon\ra) (\abstr{\alpha_{k+1}}{\m\fb} \mon\rb)) \mon\rr : \fc} 
        \DisplayProof \\ 
      \PrAx{\Gamma, \alpha_{k+1} : \fa \monSeq \mon\rr : \fb} 
      \PrLbl{\RuleNameI\limply}
      \PrUn{\Gamma \monSeq \monRaiseN{0} (\abstr{\alpha_{k+1}}{\m\fa} \mon\rr) : \fa \limply \fb} 
      \DisplayProof \quad 
      \PrAx{\Gamma \monSeq \mon\rr : \fa \limply \fb} 
      \PrAx{\Gamma \monSeq \mon\ra : \fa} 
      \PrLbl{\RuleNameE\limply}
      \PrBin{\Gamma \monSeq \monStarN{2} (\abstr{\gamma_1}{\m\fa \tarrow \m\fb} \abstr{\gamma_2}{\m\fa} \gamma_1 \gamma_2) \mon\rr \mon\ra : \fb} 
      \DisplayProof \\ 
      \PrAx{\Gamma \monSeq \mon\rr : \fa} 
      \PrLbl{\RuleNameI\forall}
      \PrUn{\Gamma \monSeq \monRaiseN{0} (\abstr\lva\Nat \mon\rr) : \qforall\lva \fa} 
      \DisplayProof \qquad 
      \PrAx{\Gamma \monSeq \mon\rr : \qforall\lva \fa} 
      \PrLbl{\RuleNameE\forall}
      \PrUn{\Gamma \monSeq (\monStarN{1} (\abstr\gamma{\Nat \tarrow \mm\fa} \gamma \lta)) \mon\rr : \fa\subst\lva\lta} 
      \DisplayProof \\ 
        \PrAx{\Gamma \monSeq \mon\rr : \fa\subst\lva\lta} 
        \PrLbl{\RuleNameI\exists}
        \PrUn{\Gamma \monSeq \monRaiseN{1} (\abstr\gamma{\m\fa} \pair\lta\gamma) \mon\rr : \qexists\lva \fa} 
        \DisplayProof \\ 
        \PrAx{\Gamma \monSeq \rr_1 : \qexists\lva \fa} 
        \PrAx{\Gamma, \alpha : \fa\subst\lva\lvb \monSeq \rr_2 : \fc} 
        \PrLbl{\RuleNameE\exists}
        \PrBin{\Gamma \monSeq \monStarN{1} (\abstr\gamma{\Nat \times \m\fa} (\abstr{y}\Nat \abstr\alpha{\m\fa} r_2)(\prl \gamma)(\prr \gamma)) \rr_1 : \fc} 
        \DisplayProof \\ 
        \PrAx{\Gamma, \alpha_{k+1} : \qforall\lvc \lvc<\lvb \limply \fa\subst\lva\lvc \monSeq \rr : \fa\subst\lva\lvb} 
        \PrLbl{\text{Ind}}
        \PrUn{\Gamma \monSeq \monRaiseN{0} (\totRec\infty f) : \qforall\lva \fa} 
        \DisplayProof 
    \end{gather*}
    \begin{legend}
    where all formulas in rule Atm are atomic, \(\lta\) is any term and
    \(f\) is defined as follows: 
    \[ f \equiv \abstr{y}\Nat \abstr\beta{\Nat \tarrow \T\m\fa} 
      (\abstr\alpha{\Nat \tarrow \T(\Unit \tarrow \T\m\fa)} r) 
      (\abstr{z}\Nat \monRaiseN{0} 
    (\abstr{\_}\Unit \beta z)), \] 
    with \(\beta\) not free in \(r\). 
    \end{legend}
  \end{mdframed}
\end{figure}

In \cref{fig:monadic_decorated_rules},
the rule labeled Atm shows how to decorate any atomic rule of \(\HA\). 
By definition unfolding, we may check that an atomic rule is interpreted as a kind of ``merging'' of the information associated to each premise. 
The nature of the merging depends on the monad we choose.

Note how the monadic realizer of each rule is obtained by lifting the suitable term in the corresponding standard modified realizer with \(\monStarN k\) or \(\monRaiseN k\). 
These monadic realizers do not take advantages of particular monadic features (it cannot be otherwise since we have made no assumption on the syntactic monad or the monadic realizability relation). 
The main difference is that they can act as ``glue'' between ``true'' monadic realizers of non constructive axioms and rules, for instance the one we shall build in \cref{sec:monadic_interactive_realizability}. 

Here we can see that monadic realizability generalizes intuitionistic realizability: decorated rules in \cref{fig:monadic_decorated_rules} reduce to the standard decorated rules for intuitionistic modified realizability in the case of the identity monad \(\IdM\). 

Now we can prove that \(\HA\) is sound with respect to the monadic realizability semantics given in \cref{def:mon_sem}. 
This amounts to say that we can use proof decoration to extract, from any proof in \(\HA\), a monadic realizer that makes its conclusion valid. 
We prove this for a generic monad, which means that the soundness of \(\HA\) does not depend on the special properties of any specific monad. 
The proof only needs the simple properties we have requested in \cref{def:monadic_realizability_relation}. 
\begin{theorem}[Soundness of \(\HA\) with respect to the Monadic Realizability Semantics] \label{thm:ha_soundness}
  Let \(\mathcal{D}\) be a derivation of \( \Gamma \seq \fa \) in \(\HA\) and \(\monRe\) a monadic realizability relation. 
  Then \( \Gamma \monSeq \mathcal{D}^* : \fa \) is valid with respect to \(\monRe\). 
\end{theorem}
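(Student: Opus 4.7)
The plan is to proceed by structural induction on the derivation $\mathcal{D}$, verifying that each decorated rule in \cref{fig:monadic_decorated_rules} preserves validity in the sense of \cref{def:mon_sem}. Having fixed numerals $\num{n}_1,\dotsc,\num{n}_l$ for the free individual variables and inner realizers $\ra_1,\dotsc,\ra_k$ for the proof variables of the assumptions in $\Gamma$, one checks that the decorated term attached to the conclusion is a monadic realizer of the conclusion formula under the substitution. The main tools are the three axioms \cref{real1,real2,real3}, the monad equations \cref{mon1,mon2,mon_merge1}, and the fact that the families $\monStarN{k}$ and $\monRaiseN{k}$ unfold into iterated applications of $\monStar$, $\monUnit$ and $\monMerge$, so that MR1--MR3 suffice to propagate realizability through them; in particular $\monRaiseN{0}$ reduces to $\monUnit$.

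First I would dispatch the logical rules. The identity rule is immediate from \cref{real1} once we note that $\monRaiseN{0}\lva \leadsto \monUnit \lva$; the atomic rule iterates \cref{real3} to pack the premise realizers together and then applies \cref{real1} to the resulting tuple. The introduction rules for $\land$, $\lor$, $\forall$ and $\exists$ wrap the appropriate constructor with $\monUnit$ via $\monRaiseN{k}$, so they follow from the corresponding clause of $\re$ together with \cref{real1} (and \cref{real3} to merge the two premises in $\land$-introduction). The introduction rule for $\limply$ is essentially definitional, again via \cref{real1}. The elimination rules for $\land$, $\lor$, $\limply$, $\forall$ and $\exists$ all fit a single schema: $\monStarN{k}$ lifts a function valued in a $\T$-type, so by the definition of $\monStarN{k}$ together with \cref{real3,real2}, realizability of the conclusion reduces to checking that the inner function sends inner realizers of the premises to a monadic realizer of the conclusion, which in turn follows from the $\re$-clauses of the connectives being eliminated and the inductive hypothesis.

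The delicate case is the induction rule. The decoration $\monRaiseN{0}(\totRec\infty f)$ reduces to $\monUnit(\totRec\infty f)$, so by \cref{real1} it suffices to show $\totRec\infty f \re \qforall\lva \fa$ under the fixed valuation, that is, by the $\forall$-clause of $\re$, that $(\totRec\infty f)\num{n} \monRe \fa\subst{\lva}{\num{n}}$ for every $n \in \N$. I would prove this by a secondary course-of-values induction on $n$: the reduction rule for $\totRec$ gives $\totRec\infty f \num{n} \reducesto f \num{n} (\totRec{n} f)$, and from the secondary hypothesis one checks that the term $\abs{z}{\Nat} \monRaiseN{0}(\abs{\_}{\Unit} (\totRec{n} f) z)$ inner-realizes the complete-induction premise $\qforall\lvc (\lvc<\num{n} \limply \fa\subst{\lva}{\lvc})$, again invoking \cref{real1} to embed the smaller instances of $\fa$ into $\mm\fa$. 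Plugging this inner realizer in for $\alpha_{k+1}$ in the inductive hypothesis for the premise of the rule then yields a monadic realizer of $\fa\subst{\lva}{\num{n}}$, closing the outer induction step.

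The main obstacle is precisely this interplay in the induction rule: threading the secondary induction through the $\monRaiseN{0}$ and $\monStar$ layers so that \cref{real1} fires at the correct point, and matching the type $\Nat \tarrow \T(\Unit \tarrow \T\m\fa)$ appearing in the definition of $f$ with the inner type expected by the induction hypothesis on the premise. Once this is carefully set up, no further monadic ingredients beyond MR1--MR3 and the monad equations are needed, which also explains why the soundness of $\HA$ holds for an entirely arbitrary monad.
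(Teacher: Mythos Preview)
Your plan matches the paper's: structural induction on \(\mathcal{D}\), with the work for each rule reduced to a general realizability property of \(\monStarN{k}\) and \(\monRaiseN{k}\). The paper factors these out as two preliminary lemmas (proved by induction on \(k\) from \cref{real1,real2,real3} alone); you describe the same mechanism inline. One small correction: the monad equations \cref{mon1,mon2,mon_merge1} play no role in the soundness argument itself---only MR1--MR3 are used.

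There is, however, one genuine imprecision in your handling of the induction rule. Your secondary course-of-values hypothesis is stated for \(\totRec\infty f\): you assume \((\totRec\infty f)\,\num m \monRe \fa\subst\lva{\num m}\) for all \(m<n\). But after the reduction \(\totRec\infty f\,\num n \leadsto f\,\num n\,(\totRec{n} f)\), the term you must feed into the premise's hypothesis involves \((\totRec{n} f)\,\num m\), with guard \(n\), not \(\infty\). Your IH as phrased does not literally apply. The paper closes this gap by strengthening the secondary induction: it proves that for every \(n\) and every \(\omega\in\N\cup\{\infty\}\) with \(\omega=\infty\) or \(\omega>n\), one has \(\totRec\omega f\,\num n \monRe \fa\subst\lva{\num n}\). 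With this uniform statement the guard change from \(\infty\) to \(n\) (and then to \(m\) at the next level) is covered directly. Alternatively you could argue that \((\totRec{n} f)\,\num m\) and \((\totRec\infty f)\,\num m\) reduce to the same term when \(m<n\), but you would need to say so explicitly; as written, the ``from the secondary hypothesis one checks'' step does not go through.
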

The proof is long but simple, proceeding by induction on the structure of the decorated version of \(\mathcal{D}\). 

\Cref{thm:ha_soundness} entails that any specific monadic realizability notion is a sound semantics for at least \(\HA\). 
Later, when we prove that \(\HA+\EM\) is sound with respect to interactive realizability semantics, 
we will only need to show that \(\EM\) is sound since the soundness of \(\HA\) derives from \cref{thm:ha_soundness}.

\section{Monadic Interactive Realizability} \label{sec:monadic_interactive_realizability} 

\setmonad\IR

In this section we define interactive realizability as a particular notion of monadic realizability. 
Thus we show that monadic realizability may realize a sub-classical principle, in this case excluded middle restricted to semi-decidable statements. 

In order to describe the computational properties of interactive realizability (see \cite{aschieriB10}) we need to define a suitable monad. 
As we said, interactive realizability is based on the idea of learning by trial and error. 
We express the idea of trial and error with an exception monad: a term of intended type \(\ta\) has actual type \(\ta + \Ex\), where \(\Ex\) is the type of exceptions, so that a computation may either return its intended value or an exception. 
The learning part, which is described by the dependency on a knowledge state, fits with a part of the side-effects monad (see \cite{moggi91} for more details): a term of intended type \(\ta\) has actual type \(\State \tarrow \ta\), where \(\State\) is the type of knowledge states, so that the value of a computation may change with the state. 
The syntactic monad we are about to define for interactive realizability combines these two monads. 

We introduce \(\Ex\) and \(\State\) as base types and some term constants satisfying suitable properties. 
Actually, system \(T'\) is expressive enough to explicitly define \(\Ex\) and \(\State\) and the terms we need, but we prefer a cleaner abstract approach. 
Therefore, we explain the intended meaning of \(\Ex\) and \(\State\) and use it in the following as a guideline. 

\newcommand\LRel{\mathcal{R}}
We write \(\LRel_k\) for the set of symbols of the \(k\)-ary predicates in \(\HA\). 
The intended interpretation of a (knowledge) state \(s\) is a partial function
\[ \interp s : \left( \bigcup_{k=0}^\infty \LRel_{k+1} \times \N^k \right) \rightharpoonup \N, \] 
that sends a \(k+1\)-ary predicate symbol \(P\) and a \(k\)-tuple of parameters \( m_1, \dotsc, m_k \in \N \) to a witness for \(\qexists{x} P (\num m_1, \dotsc, \num m_k, x) \).
We interpret the fact that a state \(s\) is undefined for some \(P,m_1,\dotsc,m_k\) as a lack of knowledge about a suitable witness. 
This is either due to the state being incomplete, meaning that there exists a suitable witness \(m\) we could use to extend the state by setting \( s(P,(m_1,\dotsc,m_k)) = m\), or to the fact that there are no suitable witness, meaning that \( \qforall{x} \lnot P(\num m_1, \dotsc, \num m_k, x) \) holds\footnote{Here we are using \(\EM\) at the metalevel in order to explain the possible situations. Using a principle at the metalevel in order to justify the same principle in the logic is a common practice. In our treatment this is not problematic because we never claim to be able to \emph{effectively decide} which situation we are in.}. 
We require that \(s\) satisfies two properties. 
The first is for \(s\) to be \emph{sound}, meaning that its values are actually witnesses. More precisely: 
\[ \interp s (P, (m_1, \dotsc, m_k)) = m \text{ entails } P(\num m_1, \dotsc, \num m_k, \num m). \]
The second is that \(s\) is \emph{finite}, namely that the domain of \(s\) (the set of values \(s\) is defined on) is finite. 
This because we want a knowledge state to encode a finite quantity of information. 
Let \(\interp\State\), the set of all finite sound states, be the intended interpretation of the type \(\State\). 
Recall that there is a canonical partial order on states given by the extension relation: we write \(s_1 \leq s_2\) and read ``\(s_2\) \emph{extends} \(s_1\)'' if and only if \(s_2\) is defined whenever \(s_1\) is and with the same value. 

An exception \(e : \Ex \) is produced when we instantiate an assumption of the form \( \qforall{x} \allowbreak \lnot P(\num m_1, \dotsc, \allowbreak \num m_k, x) \) with some \(m\) such that \( \lnot P(\num m_1, \dotsc, \num m_k, \num m) \) does not actually hold (remember that we proceed by trial and error, in particular we may assume things that are actually false). 
This means that \(m\) is a witness for \(\qexists{x} P (\num m_1, \dotsc, \num m_k, x) \), in particular it could be used to extend the knowledge state on values where it was previously undefined. 
The role of exceptions is to encode information about the discovery of new witnesses: since we use this information to extend states the intended interpretation of an exception \(e\) is as a partial function: 
\[ \interp e : \interp\State \rightharpoonup \interp\State. \]
Since \(e\) extends states we require that \( s \leq e(s) \). 
We interpret an exception as a partial function because an exception \(e\) may fail to extend some state \(s\). 
The reason is that \(e\) may contain information about a witness \(m'\) for an existential statement \(\qexists{x} P (\num m_1, \dotsc, \num m_k, x) \) on which \(s\) is already defined as \(m\). 
Note that an existential formula can have more that one witness so two cases may arise: 
either \(m = m'\), meaning that the information of \(e\) is already part of \(s\) or \(m \neq m'\) so that the information of \(e\) is incompatible with the information of the state. 
In the first case \( e(s) = s \), while in the second case \( e(s) \) is not defined. 

Before defining the syntactic monad \(\IR\) for interactive realizability, we need to introduce some terminology on exceptions and states. 
\begin{definition}[Terminology on Exceptions and States]
  We say that a term of type \(\ta + \Ex\) is either a \emph{regular} value \(a\) if it reduces to \(\inl a\) for some term \(a : \ta\) or an \emph{exceptional} value if it reduces to \(\inr e\) for some term \(e : \Ex\). 
  We say that a term of type \(\State \tarrow \ta\) is a \emph{state function}. 
  Finally we say that an exception \(e\) \emph{properly extends} \(s\) if \(e(s)\) is defined and \( s < e(s) \). 
\end{definition}

Note that different exceptions might be used to extend a knowledge state in incompatible ways, that is, by sending the same predicate symbol and the same tuple of parameters into different witnesses. 
In order to mediate these conflicts, we introduce the term constant: 
\[ \exmerge : \Ex \tarrow \Ex \tarrow \Ex. \]
The role of the \(\exmerge\) function is to put together the information from two exceptions into a single exception. 
This means that \(\exmerge\) cannot simply put together all the information from its argument: if such information contains more that one distinct witness for the same existential statement it must choose one in some arbitrary way, for instance the leftmost or the minimum witness.
Many choices for \(\exmerge\) are possible, provided that they satisfy the following property:
\begin{equation}\label[property]{exmerge_prop} \tag{EX}
  \left. \begin{array}{r} 
    e_1 \text{ properly extends } s \\ 
e_2 \text{ properly extends } s \end{array} \right\} 
\text{ entails that } \exmerge e_1 e_2 \text{ properly extends } s,
\end{equation}
for any state \(s\) and exceptions \(e_1, e_2\).

Before the definition we give an informal description of \(\IR\). 
The monad \(\IR\) maps a type \(\ta\) to \(\State \tarrow (\ta + \Ex)\), that is, values of type \(\ta\) are lifted to state functions that can throw exceptions. 
The term \(\monUnit\) maps a value \(a : \ta\) to a constant state function that returns the regular value \(a\). 
If \(f : \ta \tarrow \T\tb\) then \(\monStar f\) is a function with two arguments, a state \(s\) and a state function \(\mon a : \T\ta\). 
It evaluates \(\mon a\) on \(s\): if this results in a regular value \(a : \ta\) it applies \(f\) to \(a\), otherwise it propagates the exceptional value. 
Finally, if \(\mon a : \T\ta\) and \(\mon b : \T\tb\) are two state functions, then \(\monMerge \mon a \mon b\) is a state function that evaluates its arguments on its state argument:
when both arguments are regular values it returns their pair, otherwise it propagates the exception(s), using \(\exmerge\) if both arguments are exceptional values. 

We are now ready to give the formal definition of \(\IR\). 
\begin{definition}[Interactive Realizability Monad] \label{def:ir_monad}
  Let \(\IR\) be the tuple \( (\T\), \(\monUnit\), \(\monStar\), \(\monMerge)\), where 
  \begin{align*} 
    \T \ta 
    &= \State \tarrow (\ta + \Ex), \\ 
    \monUnit[\ta] 
    &\equiv \abstr\tta\ta \abstrS{\_} \inl[\ta,\Ex] \tta, \\ 
    \monStar[\ta,\tb] 
    &\equiv \abstr{f}{\ta \tarrow \T\tb} \abstr{\mon\tta}{\T\ta} 
    \abstrS{s} \case[\ta,\Ex,\tb+\Ex] (\mon\tta s) 
    (\abstr\tta\ta f \tta s) \inr[\tb,\Ex], \\  
    \monMerge[\ta,\tb] 
    &\equiv \abstr{\mon\tta}{\T\ta} \abstr{\mon\ttb}{\T\tb} \abstrS{s} \case[\ta,\Ex,(\ta \times \tb)+\Ex] (\mon \tta s) \\ 
    &\phantomrel\equiv (\abstr\tta\ta \case[\tb,\Ex,(\ta \times \tb)+\Ex] (\mon\ttb s) (\abstr\ttb\tb \inl[\ta \times \tb,\Ex] (\pair \tta \ttb)) \inr[\ta \times \tb,\Ex]) \\ 
    &\phantomrel\equiv (\abstr{e_1}\Ex \case[\tb,\Ex,(\ta \times \tb)+\Ex] (\mon\ttb s) (\abstr\_\tb \inr[\ta \times \tb,\Ex] e_1) (\abstr{e_2}\Ex \inr[\ta \times \tb,\Ex] (\exmerge e_1 e_2))),
  \end{align*} 
  for some \(\exmerge\) satisfying \cref{exmerge_prop}. 
\end{definition} 

The term \(\monUnit[\ta]\) takes a value \(a : \ta\) and produces a constant state function that returns the regular (as opposed to exceptional) value \(a\). 
The term \(\monStar[\ta,\tb]\) takes a function \(f : \ta \tarrow \T\tb\) and returns a function \(f'\) which lifts the domain of \(f\) to \(\T\ta\). 
The state function returned by \(f' \) when applied to some \(\mon a : \T\ta\) behaves as follows: it evaluates \(\mon a\) on the state and if \(\mon a s\) is a regular value \(a : \ta\) it returns \(f a\); otherwise if \(\mon a s\) is an exception it simply propagates the exception. 
The term \(\monMerge[\ta,\tb]\) takes two state functions \(\mon a : \T\ta\) and \(\mon b : \T\tb\) and returns a state function \( \mon c : \T(\ta \times \tb)\). 
When both arguments are regular values it returns their pair, otherwise it propagates the exception(s), using \(\exmerge\) if both arguments are exceptional. 

We omit the proof of the fact that \(\IR\) is a syntactic monad since it is a simple verification. 


We now define a family of monadic realizability relations, one for each state \(s\), requiring that a realizer, applied to a knowledge state \(s\), 
either realizes a formula in the sense of the BHK semantics or can extend \(s\) with new knowledge.
\begin{definition}[Interactive Realizability Relation] 
  Let \(s\) be a state, \( \mon\rr : \mm\fa \) be a term and \(\fa\) a closed formula. 
  We define two realizability relations \(\monRe^s\) and \(\re^s\) by simultaneous induction on the structure of \(\fa\): 
  \begin{itemize}
    \item \(\mon\rr \monRe^s \fa \) if and only if we have that \(\mon\rr s\) is either a regular value \(\rr\) such that \(\rr \re^s \fa\) or an exceptional value \(e\) such that \(e\) properly extends \(s\), 
    \item \(\re^s\) is defined in terms of \(\monRe^s\) by the clauses in \cref{def:monadic_realizability_relation}. 
  \end{itemize}
  We say that \(\mon\rr\) (resp. \(\rr\)) is a \emph{monadic} (resp. \emph{inner}) \emph{interactive realizer} of \(\fa\) with respect to \(s\) when \(\mon\rr : \mm\fa\) (resp. \(\rr : \m\fa\)) and \(\mon\rr \monRe^s \fa\) (resp. \(\rr \re^s \fa\)).
\end{definition}

In order to show that any interactive realizability relations with respect to a state is a monadic realizability relation we need to verify that is satisfies the required properties. 
\begin{lemma}[The Monadic Realizability Relation \(\monRe^s\)] \label{thm:ir_monadic_realizability}
  For any state \(s\), \(\monRe^s\) is a monadic realizability relation. 
\end{lemma}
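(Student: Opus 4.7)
The plan is to verify the three defining properties \cref{real1,real2,real3} of a monadic realizability relation by unfolding the definitions of $\monUnit$, $\monStar$ and $\monMerge$ from \cref{def:ir_monad} and performing a case analysis on whether the state functions involved reduce to regular or exceptional values.

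For \cref{real1}, I would observe that if $\rr \re^s \fa$, then $(\monUnit \rr) s$ reduces to $\inl \rr$, which is a regular value whose content already realizes $\fa$ inner-style. Hence $\monUnit \rr \monRe^s \fa$ is immediate.

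For \cref{real2}, I would take $\rr \re^s \fb \limply \fc$ and $\mon\ra \monRe^s \fb$, compute $(\monStar \rr \mon\ra) s$ using the definition of $\monStar$, and split on the two possible behaviors of $\mon\ra s$: if it is a regular value $\ra$ with $\ra \re^s \fb$, then the whole expression reduces to $\rr \ra s$, and by the realizer clause for implication $\rr \ra \monRe^s \fc$, so $(\rr \ra) s$ is either a regular realizer of $\fc$ or an exception properly extending $s$; if instead $\mon\ra s$ is exceptional and properly extends $s$, the definition of $\monStar$ propagates it as $\inr e$, which is again an exception properly extending $s$.

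For \cref{real3}, I would take $\mon\ra \monRe^s \fb$ and $\mon\rb \monRe^s \fc$ and compute $(\monMerge \mon\ra \mon\rb) s$ by the nested case-analysis in \cref{def:ir_monad}, handling the four combinations of regular/exceptional values for $\mon\ra s$ and $\mon\rb s$. Three of the four cases are immediate: if both return regular values $\ra,\rb$, the result is $\inl (\pair \ra \rb)$, and since the projections reduce to $\ra$ and $\rb$ we get $\pair \ra \rb \re^s \fb \land \fc$ directly from the clause for conjunction; if exactly one value is exceptional, that exception is propagated unchanged, still properly extending $s$. The last case is the only one that genuinely requires an extra ingredient: when both are exceptions $e_1, e_2$ properly extending $s$, the result is $\inr (\exmerge e_1 e_2)$, and here I would invoke \cref{exmerge_prop} to conclude that $\exmerge e_1 e_2$ properly extends $s$ as well. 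This last sub-case is the only ``non-trivial'' point of the argument; everything else is mechanical unfolding, which is why the monad was designed with $\exmerge$ satisfying precisely \cref{exmerge_prop}.
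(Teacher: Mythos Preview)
Your proposal is correct and follows essentially the same approach as the paper: verifying \cref{real1,real2,real3} by unfolding the definitions of $\monUnit$, $\monStar$, $\monMerge$ and doing the obvious case analysis on regular versus exceptional values, invoking \cref{exmerge_prop} precisely in the both-exceptional sub-case of \cref{real3}. The paper's proof is simply a more verbose version of what you describe.
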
 
Following \cref{def:mon_sem}, for each state \(s\), the monadic realizability relation \(\monRe^s\) induces a monadic realization semantics, which realizes \(\HA\) by \cref{thm:ha_soundness}.  
We employ this family of semantics indexed by a state in order to define another one, which does not depend on a state. 
\begin{definition}[Interactive Realizability Semantics] \label{def:interactive_realizability_semantics}
  We say that the decorated sequent \(\Gamma \monSeq \mon\rr : \fa\) is valid if and only if it is valid with respect to the semantics induced by each \(\monRe^s\) for every state \(s\). 
\end{definition} 
We shall show how we can realize \(\EM\) in this semantics.

Interactive realizability aims at producing a realizer of the \(\EM\) axiom, a weakened form of the excluded middle restricted to \(\Sigma^0_1\) formulas. 
A generic instance of \(\EM\) is written as:
\[ 
  \EM(\lafa, t_1, \dotsc, t_k) \equiv (\qforall\lvb \lafa(t_1, \dotsc, t_k, \lvb)) \lor (\qexists\lvb \lnot \lafa(t_1, \dotsc, t_k, \lvb)). 
\]
for any \(k+1\)-ary relation \(\lafa\) and arithmetic terms \(t_1, \dotsc, t_k\). 
We call \emph{universal} (resp. \emph{existential}) \emph{disjunct} the first (resp. the second) disjunct of \(\EM(\lafa, t_1, \dotsc, t_k)\). 
For more information on \(\EM\) see \cite{akamaBHK04}. 

The main hurdle we have to overcome in order to build a realizer of \( \EM(\lafa, t_1, \dotsc, t_k) \) is that, by the well-known undecidability of the halting problem, there is no total recursive function that can choose which one of the disjuncts holds. 
Moreover, if the realizer chooses the existential disjunct, it should also be able to provide a witness. 

As we said before terms of type \(\State\) contain knowledge about witnesses of \(\Sigma^0_1\) formulas. 
In order to query a state \(s\) for a witness \(n\) of 
\(\qexists\lvb \lafa(\num n_1, \dotsc, \num n_k, \lvb)\) for some natural numbers \(n_1, \dotsc, n_k\), we need to extend system \(T'\) with the family of term constants:
\[ \query_\lafa : \State \tarrow \underbrace{\Nat \tarrow \dotsb \tarrow \Nat}_k \tarrow \Unit + \Nat. \] 
indexed by \(\lafa \in \LRel_{k+1} \) (and implicitly by \(k \ge 0\)). 
The value of \( \query_\lafa s \num n_1 \dotsm \num n_k \) should be either \(\unit\) if the \(s\) contains no information about such an \(n\) or a numeral \(\num n\) such that \( \interp\lafa(n_1, \dotsc, n_k, n) \) is true. 
More formally we require that \(\query_\lafa\) satisfies the following syntactic property: 
\begin{equation} \tag{IR1} \label[property]{query_prop}
  \query_\lafa s \num n_1 \dotsm \num n_k \leadsto \inr \num n \text{ entails that } \lafa (\num n_1, \dotsc, \num n_k, \num n) \text{ holds}
\end{equation} 
for all natural numbers \( n_1, \dotsc, n_k \). 
This amounts to require that state do not answer with wrong witnesses and it follows immediately from the intended interpretation if we suitably define \( \query_\lafa s \num n_1 \dotsm \num n_k \) using \( \interp s (\lafa, (n_1, \dotsc, n_k)) \). 

An interactive realizer \(\mon\rr_\lafa\) of \(\EM(\lafa)\) will behave as follows. 
When it needs to choose one of the disjuncts it queries the state. 
If the state answer with a witness, \(\mon\rr_\lafa\) reduces to a realizer \(\mon\rr_\exists\) of the existential disjunct containing the witness given by the state. 
Otherwise we can only assume (since we do not know any witness) that the universal disjunct holds and thus \(\mon\rr_\lafa\) reduces to a realizer \(\mon\rr_\forall\) of the universal disjunct. 
This assumption may be wrong if the state is not big enough. 
When \(\mon\rr_\forall\) is evaluated on numerals (this correspond to the fact that an instance \(\lafa(\num n_1, \dotsc, \num n_k, \num n)\) of the universal disjunct assumption is used in the proof), \(\mon\rr_\forall\) checks whether the instance holds. 
If this is not the case the realizer made a wrong assumption and \(\mon\rr_\forall\) reduces to an exceptional value, with the effect of halting the regular reduction and returning the exceptional value. 
For this we need to extend the system \(T'\) with the last family of terms:
\[ \eval_\lafa : \underbrace{\Nat \tarrow \dotsb \tarrow \Nat}_k \tarrow \Nat \tarrow \Unit + \Ex, \] 
again indexed by \(\lafa \in \LRel_k\). 
We shall need \(\eval_\lafa\) to satisfy the following property: 
\[ \tag{IR2} \label[property]{eval_prop}
  \eval_\lafa \num n_1 \dotsm \num n_k \num n \leadsto \inl \unit \text{ entails that } \lafa (\num n_1, \dotsc, \num n_k, \num n) \text{ does not hold}, 
\] 
for all natural numbers \(n_1, \dotsc, n_k, n \). 
This guarantees that if the universal disjunct instance does not hold \(\eval_\lafa\) reduces to an exceptional value. 
Thus an interactive realizer which uses a false instance of an universal assumption cannot reduce to a regular value. 

The last property we need is that for any state \(s\) and natural numbers \(n_1, \dotsc, n_k\), 
\[ \tag{IR3} \label[property]{query_eval_prop} 
  \left. \begin{array}{r}
    \query_\lafa s \num n_1 \dotsm \num n_k \leadsto \inl \unit \\
    \quad \eval \num n_1 \dotsm \num n_k \leadsto \inr e 
\end{array} \right\}
\text{ entails that \(e\) properly extends \(s\).} 
\]
This condition guarantees that we have no ``lazy'' realizers that throw exceptions encoding witnesses that are already in the state. 


\newtermconstant\emReal{\termname{em}}
Now we can define a realizer for \(\EM(\lafa, t_1, \dotsc, t_k)\) as follows:
\begin{align*}
  \mon\emReal(\lafa, t_1, \dotsc, t_k)
  \equiv \abstr{s}\State \inl 
  (\case &(\query_\lafa s t_1 \dotsm t_k) \\
         &(\abstr\_\Unit \inl (\abstr\lvb\Nat \abstr\_\State \eval_\lafa t_1 \dotsm t_k \lvb)) \\
         &(\abstr\lvb\Nat \inr (\pair \lvb \monUnit))). 
\end{align*} 
Of course we need to check that our definition is correct. 
\begin{lemma}[Interactive Realizer for \(\EM\)] \label{thm:realizer_for_em}
  Given any \(\EM\) instance \(\EM(\lafa, t_1, \dotsc, t_k)\), the decorated sequent:
  \begin{equation} \label{eq:em_seq} 
    \alpha_1 : \fa_1, \dotsc, \alpha_l : \fa_l \monSeq \mon\emReal(\lafa, t_1, \dotsc, t_k) : \EM(\lafa, t_1, \dotsc, t_k), 
  \end{equation}
  is valid with respect to the interactive realizability semantics given in \cref{def:interactive_realizability_semantics}. 
\end{lemma}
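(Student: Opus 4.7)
The plan is to unfold \cref{def:interactive_realizability_semantics,def:monadic_realizability_relation} and reduce the claim to an explicit computation, by case analysis on the outcome of $\query_\lafa$. First I would eliminate the context: the proof variables $\alpha_1, \dotsc, \alpha_l$ do not occur in $\mon\emReal(\lafa, t_1, \dotsc, t_k)$, so substituting inner realizers for the $\alpha_i$ has no effect, and the terms $t_i$ reduce to numerals $\num{m}_i$ once we replace the free variables of the formula by numerals. Hence it suffices to show that, for every state $s$,
\[ \mon\emReal(\lafa, \num{m}_1, \dotsc, \num{m}_k) \monRe^s \EM(\lafa, \num{m}_1, \dotsc, \num{m}_k). \]
Applied to $s$, this realizer reduces to $\inl v$, a regular value, so by the definition of $\monRe^s$ on monadic realizers it remains to check that $v \re^s \EM(\lafa, \num{m}_1, \dotsc, \num{m}_k)$.

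The shape of $v$ is determined by the $\case$ on $\query_\lafa s \num{m}_1 \dotsm \num{m}_k$, so I split into two cases. When $\query$ returns a witness $\num{n}$, i.e.\ reduces to $\inr \num{n}$, the value $v$ reduces to $\inr (\pair \num{n} \monUnit)$, the right injection in $\m\EM$. Using the disjunction, existential and implication clauses of $\re^s$ in succession, this reduces to showing that $\monUnit\,a \monRe^s \lfalse$ for every $a$ with $a \re^s \lafa(\num{m}_1, \dotsc, \num{m}_k, \num{n})$: property IR1 supplies exactly the information about the truth value of $\lafa(\num{m}_1, \dotsc, \num{m}_k, \num{n})$ needed to discharge this obligation, the hypothesis being vacuous in the relevant subcase.

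When instead $\query$ returns no witness, i.e.\ reduces to $\inl \unit$, the value $v$ reduces to $\inl (\abstr\lvb\Nat \abstr\_\State \eval_\lafa \num{m}_1 \dotsm \num{m}_k \lvb)$, the left injection. By the disjunction and universal clauses of $\re^s$ it suffices to check that, for every numeral $\num{n}$, the term $\abstr\_\State \eval_\lafa \num{m}_1 \dotsm \num{m}_k \num{n}$ monadic-realizes $\lafa(\num{m}_1, \dotsc, \num{m}_k, \num{n})$ at $s$; applied to $s$ this further reduces to $\eval_\lafa \num{m}_1 \dotsm \num{m}_k \num{n}$, and a case split on its outcome completes the argument. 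The regular branch of $\eval$ is handled by IR2, and the exceptional branch by IR3 — crucially, the Case II hypothesis $\query_\lafa s \num{m}_1 \dotsm \num{m}_k \leadsto \inl \unit$ is exactly what IR3 requires to conclude that the produced exception properly extends $s$, as opposed to merely leaving it unchanged.

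The main obstacle is this last case. IR3 is indispensable because, without it, the exception produced by $\eval$ could encode a witness already recorded in $s$, giving only $s \le e(s)$ without strict extension, which is insufficient to satisfy the monadic realizer condition. Once the three properties IR1, IR2 and IR3 are aligned with the three branches of the computation (query witness, query empty with $\eval$ regular, query empty with $\eval$ exceptional), the rest of the proof is mechanical bookkeeping — propagating injections through the $\re^s$ clauses for $\lor$, $\forall$ and $\exists$, and unfolding the explicit definition of $\monUnit$ from \cref{def:ir_monad}.
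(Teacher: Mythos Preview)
Your proposal is correct and follows essentially the same route as the paper's proof: eliminate the context (noting the proof variables do not occur in the realizer), apply the state $s$ to obtain a regular value, then case-split on the outcome of $\query_\lafa$ and, in the no-witness branch, further case-split on the outcome of $\eval_\lafa$, invoking \cref{query_prop}, \cref{eval_prop} and \cref{query_eval_prop} respectively in the three leaves. The only cosmetic difference is that the paper treats the $\inl\unit$ branch of $\query$ first and the $\inr\num n$ branch second, and keeps the notation $t_i[\Omega]$ rather than normalising to numerals $\num{m}_i$.
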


Then we can extend our proof decoration for \(\HA\) (see \cref{fig:monadic_decorated_rules}) with the new axiom rule: 
\[ 
  \PrAx{}
  \PrLbl\EM 
  \PrUn{\Gamma \monSeq \mon\emReal(\lafa, t_1, \dotsc, t_k) : \EM(\lafa, t_1, \dotsc, t_k)}
  \DisplayProof 
\]
and show that interactive realizability realizes the whole \(\HA+\EM\). 
\begin{theorem}[Soundness of \(\HA+\EM\) with respect to Interactive Realizability Semantics] \label{thm:em_soundness}
  Let \(\mathcal{D}\) be a derivation of \( \Gamma \seq \fa \) in \(\HA+\EM\). Then \( \Gamma \monSeq \mathcal{D}^* : \fa \), where \(\mathcal{D}^*\) is the term obtained by decorating \(\mathcal{D}\), is valid with respect to the interactive realizability semantics. 
\end{theorem}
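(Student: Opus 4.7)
The plan is to proceed by induction on the structure of the derivation \(\mathcal D\), matching the inductive construction of \(\mathcal D^*\) through proof decoration. Crucially, I will not redo the work of \cref{thm:ha_soundness}: the interactive realizability semantics (\cref{def:interactive_realizability_semantics}) validates a decorated sequent precisely when it is valid with respect to every \(\monRe^s\), so soundness will be reduced to a parametric application of the earlier abstract soundness result, plus \cref{thm:realizer_for_em} for the single new rule.

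For the inductive cases corresponding to the rules of \(\HA\), I would argue as follows. Fix an arbitrary state \(s\). By \cref{thm:ir_monadic_realizability}, \(\monRe^s\) is a monadic realizability relation, so \cref{thm:ha_soundness} applies to the sub-derivation obtained by restricting to the rules in \cref{fig:monadic_decorated_rules}. Hence \(\Gamma \monSeq \mathcal D^* : \fa\) is valid with respect to \(\monRe^s\). Since \(s\) was arbitrary, by \cref{def:interactive_realizability_semantics} the decorated sequent is valid with respect to the interactive realizability semantics. In effect, the whole \(\HA\) fragment of the induction is absorbed into a single appeal to the abstract soundness theorem, which is possible precisely because the decorations attached to \(\HA\) rules make no assumption about the monad.

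For the new case, namely the \(\EM\) axiom rule, the conclusion is \(\Gamma \monSeq \mon\emReal(\lafa, t_1, \dotsc, t_k) : \EM(\lafa, t_1, \dotsc, t_k)\). This is exactly the content of \cref{thm:realizer_for_em}, which is already stated for an arbitrary context \(\alpha_1 : \fa_1, \dotsc, \alpha_l : \fa_l\) and therefore applies directly with context \(\Gamma\). No further work is needed at this step, since the \(\EM\) rule has no premises and the inductive hypothesis is vacuous.

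The theorem is therefore essentially a packaging of three previously established ingredients: \cref{thm:ha_soundness} (abstract soundness of \(\HA\) for any monadic realizability relation), \cref{thm:ir_monadic_realizability} (the interactive realizability relation \(\monRe^s\) is monadic for each state \(s\)), and \cref{thm:realizer_for_em} (the term \(\mon\emReal\) realizes \(\EM\) uniformly in \(s\)). The only real obstacle has already been handled elsewhere: proving \cref{thm:realizer_for_em}, where one must verify that the behaviour of \(\mon\emReal\) — querying the state, branching into \(\mon\rr_\forall\) or \(\mon\rr_\exists\), and raising an exception via \(\eval_\lafa\) on a wrong guess — matches the clauses of \(\monRe^s\) using \cref{query_prop,eval_prop,query_eval_prop}. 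Once that lemma is in hand, the present theorem is essentially a straightforward structural induction with no additional reasoning about states, exceptions, or the monadic combinators.
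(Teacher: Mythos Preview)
Your proposal is correct and follows essentially the same route as the paper: fix an arbitrary state \(s\), run the structural induction on \(\mathcal D\), reuse the rule-by-rule case analysis from the proof of \cref{thm:ha_soundness} for every \(\HA\) rule (note that it is the \emph{proof} of that theorem, not its statement, that you are invoking, since the statement speaks only of pure \(\HA\) derivations), and discharge the \(\EM\) axiom via \cref{thm:realizer_for_em}. Aside from that small imprecision in wording, your argument and the paper's coincide.
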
 

\section{Conclusions} 

As we mentioned in the introduction, 
interactive realizability describes a learning by trial-and-error process. 
In our presentation we focused on the evaluation of interactive realizers, which corresponds to the trial-and-error part and is but a single step in the learning process.  
For the sake of completeness, we briefly describe the learning process itself. 

We can interpret an interactive realizer \(\mon\rr\) of a formula \(\fa\) as a function \(f\) from states to states. 
Recall that the intended interpretation of a term \(e : \Ex \) is a function that extends states. 
Then we can define \(f\) by means of \(\mon\rr\) as follows:
\[
  f (s) = \begin{cases}
    \interp{e} (s) \qquad & \text{if } \mon\rr \leadsto \inr e, \\ 
    s \qquad & \text{if } \mon\rr \leadsto \inl t \text{ for some t.}
    \end{cases}
\]
Note that by definition of \(\monRe\) we know that in the first case \(\interp{e} s\) properly extends \(s\). 
We can think of \(f\) as a learning function: we start from a knowledge state and try to prove \(\fa\) with \(\mon\rr\). If we fail, we learn some information that was not present in the state and we use it to extend the state. If we succeed then we do not learn anything and we return the input state. 
Thus note that the fixed points of \(f\) are exactly the states containing enough information to prove \(\fa\).

By composing \(f\) with itself we obtain a learning process: we start from some state (for instance the empty one) and we apply \(f\) repeatedly. 
If in this repeated application eventually produces a fixed point, the learning process ends, since we have the required information to prove \(\fa\). 
Otherwise we build an infinite sequence of ever increasing knowledge states whose information is never enough to prove \(\fa\). 
The fact that the learning process described by interactive realizability ends is proved in Theorem 2.15 of \cite{aschieriB10}. 

In order to express the learning process in system \(T'\) we would need some sort of fix point operator. 
However, we do not need control operators or even the continuation monad, since we simulate exceptions by means of the exception monad without really interrupting the evaluation of our realizers. 
Unfortunately the price for this simplicity is that the learning process is inefficient: each time a realizer reduces to an exceptional value, we start again its evaluation from the beginning, even though the initial part of the evaluation remains the same. 

We wish to point out one of the main differences between our presentation of interactive realizability and the one given in \cite{aschieriB10}. 
In \cite{aschieriB10}, the formula-as-types correspondence is closer to the standard one. 
Exceptions are allowed only at the level of atomic formulas and \(\exmerge\) is only used in atomic rules. 
For instance a realizer for a conjunction \(\fa \land \fb\) could normalize to \(\pair{e_1}{e_2}\). 
In this case, the failure of the realizer is not apparent (at least at the top level) and it is not clear which one of \(e_1\) or \(e_2\) we are supposed to extend the state with. 
In our version exceptions are allowed at the top level of any formula and they ``climb'' upwards whenever possible by means of \(\exmerge\). 


\bibliographystyle{plain}
\bibliography{realizability}

\newpage
\appendix
\section{Technical Appendix}
In this section we collect the parts that did not fit in the page limit. 

\setmonad\MM

\subsection{Omitted Remarks}

\begin{remark}
  The definition of \(\mm\cdot\) and \(\m\cdot\) can be derived from the Curry-Howard correspondence between formulas and types and from a call-by-name monadic translation for types. 
  \newcommand\formType[1]{\lvert #1 \rvert}
  We define the standard interpretation \(\formType\cdot\) that maps a formula into the type of its realizers:
  \begin{align*}
    \formType\lafa &= \Unit, & 
    \formType{\fa \land \fb} &= \formType\fa \times \formType\fb, \\ 
    \formType{\fa \lor \fb} &= \formType\fa + \formType\fb, & 
    \formType{\fa \limply \fb} &= \formType\fa \tarrow \formType\fb,\\ 
    \formType{\qforall{x} \fa} &= \Nat \tarrow \formType\fa, & 
    \formType{\qexists{x} \fa} &= \Nat \times \formType\fa. 
  \end{align*}
  Next we define a translation \(\monTrans\cdot\) that lifts types to their monadic counterparts:
  \begin{align*}
    \monTrans{\ta_0} &\equiv \ta_0, &
    \monTrans{\ta\tarrow\tb} &\equiv \monTrans\ta \tarrow \T\monTrans\tb, \\
    \monTrans{\ta\times\tb} &\equiv \monTrans\ta \times \monTrans\tb, & 
    \monTrans{\ta+\tb} &\equiv \monTrans\ta + \monTrans\tb,  
  \end{align*}
  where \(\ta_0\) is a ground type. 
  The first two clauses are taken from \cite{wadler94} and the other ones are a simple extension, based on the idea that products and sums behave like ground types. 

  By composition we can define the types for the monadic realizers of a formula: 
  \[
    \m\fa \equiv \monTrans{\formType\fa}, \qquad
    \mm\fa \equiv \T\m\fa.
  \]
  Expanding the definitions we get :
  \begin{align*}
    \m\lafa &= \Unit, \\
    \m{\fa\land\fb} &= \monTrans{\formType\fa} \times \monTrans{\formType\fb} = \m\fa \times \m\fb, \\
    \m{\fa\lor\fb} &= \monTrans{\formType\fa} + \monTrans{\formType\fb} = \m\fa + \m\fb, \\
    \m{\fa\limply\fb} &= \monTrans{\formType\fa} \tarrow \T\monTrans{\formType\fb} = \m\fa \tarrow \T\m\fb, \\ 
    \m{\qforall{x} \fa} &= \monTrans\Nat \tarrow \T\monTrans{\formType\fb} = \Nat \tarrow \T\m\fa, \\ 
    \m{\qexists{x} \fa} &= \monTrans\Nat \times \monTrans{\formType\fb} = \Nat \times \m\fa. 
  \end{align*}
  This is the same translation we described in \cref{def:mon_types}. 
\end{remark}

A slightly longer example of syntactic monad. 
\begin{example} 
  \setmonad\ExM 
  A simple but non-trivial example is the exception monad \(\ExM\).
  It describes computations which may either succeed and yield a (normal) value or fail and yield a description of the failure. 
  Consider the usual predecessor function \(\termname{pred}: \Nat \tarrow \Nat\) on the natural numbers: 
  since zero has no predecessor it is common to define \(\termname{pred} \num 0\) as zero. 
  Instead with \(\ExM\) we could have \(\termname{pred} \num 0\) fail and yield a string\footnote{assuming we had strings in our calculus} saying ``zero has no predecessor''. 

  Let \(\Ex\) be a new ground type and let \(\exmerge : \Ex \tarrow \Ex \tarrow \Ex\) be a new constant term. 
  We think terms of type \(\Ex\) as descriptions of failures and we call them \emph{exceptions}. 
  We think of \(\exmerge\) as an operation that merges the information of multiple exceptions when there are multiple failures in a computations. 
  Now we can define the syntactic monad \(\ExM\) as:
  \begin{align*}
    \T\ta &\equiv \ta + \Ex, \qquad 
    \monUnit[\ta] \equiv \abstr\tta\ta \inl[\ta,\Ex]\tta, \\
    \monStar[\ta,\tb] &\equiv 
    \abstr{f}{\ta \tarrow \tb + \Ex} \abstr\tta\ta \case[\ta,\Ex,\tb+\Ex] \tta f \inr[\tb,\Ex], \\  
    \monMerge[\ta,\tb] &\equiv 
    \abstr{\mon\tta}{\ta+\Ex} \abstr{\mon\ttb}{\tb+\Ex} \case[\ta,\Ex,(\ta\times\tb) + \Ex] \mon\tta \\
    &\phantomrel\equiv (\abstr\tta\ta \case[\tb,\Ex,(\ta\times\tb)+\Ex] \mon\ttb 
    (\abstr\ttb\tb \inl[\ta\times\tb,\Ex] (\pair[\ta,\tb] \tta \ttb)) \inr[\ta\times\tb,\Ex] ) \\
    &\phantomrel\equiv (\abstr{e_1}\Ex \case[\tb,\Ex,(\ta\times\tb)+\Ex] \mon\ttb 
    (\abstr\ttb\tb \inr[\ta\times\tb,\Ex] e_1) (\abstr{e_2}\Ex \inr[\ta\times\tb,\Ex] \exmerge e_1 e_2)) . 
  \end{align*}
\end{example}
 
\begin{remark}\allowdisplaybreaks[1]
  In \cref{fig:monadic_decorated_rules}, we wrote all realizers using only \(\monRaiseN k\) and \(\monStarN k\) for the sake of consistency, 
  but note that \(\monRaiseN 0\) could have been replaced by \(\monUnit\) since it reduces to it:
  \begin{align*}
    \monRaiseN{0}
    &\equiv_{\monRaiseN{0}} \abstr{f}\tc \monStarN{0} (\monUnit f) \\ 
    & \equiv_{\monStarN{0}} \abstr{f}\tc (\abstr{f}{\T\tc} f) (\monUnit f) \\ 
    & \reducesto_\beta \abstr{f}\tc \monUnit f \\ 
    & =_\eta \monUnit 
  \end{align*}
  Moreover \(\monRaiseN 2 \pair\) reduces to \(\monMerge\):
  \begin{align*} 
    \monRaiseN{2} \pair &\equiv_{\monRaiseN 2}
    (\abstr{f}{\ta \tarrow \tb \tarrow \ta \times \tb} \monStarN{2} (\abstr\lva\ta \abstr\lvb\tb \monUnit (f \lva \lvb))) \pair \\ 
    &\leadsto_\beta \monStarN{2} (\abstr\lva\ta \abstr\lvb\tb \monUnit (\pair \lva \lvb)) \\ 
    &\equiv_{\monStarN 2} (\abstr{f}{\ta \tarrow \tb \tarrow \T(\ta \times \tb)} \abstr\lva{\T\ta} \abstr\lvb{\T\tb}  \\
    &\phantomrel{\equiv_{\monStarN 2}}
    \monStarN{k} (\abstr\lvc{\ta \times \tb} f (\prl \lvc) (\prr \lvc) ) (\monMerge \lva \lvb)) (\abstr\lva\ta \abstr\lvb\tb \monUnit \pair\lva\lvb) \\ 
    &\leadsto_\beta \abstr\lva{\T\ta} \abstr\lvb{\T\tb} \monStar (\abstr\lvc{\ta \times \tb} (\abstr\lva\ta \abstr\lvb\tb \monUnit \pair\lva\lvb) (\prl \lvc) (\prr \lvc) ) (\monMerge \lva \lvb) \\ 
    &\leadsto_\beta \abstr\lva{\T\ta} \abstr\lvb{\T\tb} \monStar (\abstr\lvc{\ta \times \tb} \monUnit \pair{\prl \lvc}{\prr \lvc}) (\monMerge \lva \lvb) \\ 
    &=_\times \abstr\lva{\T\ta} \abstr\lvb{\T\tb} \monStar (\abstr\lvc{\ta \times \tb} \monUnit \lvc) (\monMerge \lva \lvb) \\ 
    &=_\eta \abstr\lva{\T\ta} \abstr\lvb{\T\tb} \monStar \monUnit (\monMerge \lva \lvb) \\ 
    &\leadsto_{\ref{mon2}} \abstr\lva{\T\ta} \abstr\lvb{\T\tb} \monMerge \lva \lvb \\ 
    &=_\eta \monMerge, 
  \end{align*} 
  so we could replace it in \(\RuleNameI\land\). 
\end{remark}

\subsection{Proofs Omitted from \Cref{sec:monadic_realizability}}
Here we collect the proofs that we omitted. 

In order to prove \cref{thm:ha_soundness}, we need to show that \(\monStarN{k}\) and \(\monRaiseN{k}\) satisfy a generalization of \cref{real2}. 
\begin{lemma}[Monadic Realizability Property for \(\monStarN{k}\)] \label{thm:starN} 
  Let \( \fa_1, \dotsc, \fa_k \) and \(\fb\) be any formulas and let \( \rr : \m{\fa_1} \tarrow \dotsb \tarrow \m{\fa_k} \tarrow \mm\fb \) be a term. 
  Assume that, for all terms \( \ra_1 : \m{\fa_1}, \dotsc, \ra_k : \m{\fa_k} \) such that \( \ra_1 \re \fa_1, \dotsc, \ra_k \re \fa_k \), we have: 
  \[ \rr \ra_1 \dotsm \ra_k \monRe \fb. \] 
  Then, for all terms \( \mon\ra_1 : \mm{\fa_1}, \dotsc, \mon\ra_k : \mm{\fa_k} \) such that \( \mon\ra_1 \monRe \fa_1, \dotsc, \mon\ra_k \monRe \fa_k \), we have: 
  \[ \monStarN{k} \rr \mon\ra_1 \dotsm \mon\ra_k \monRe \fb. \] 
\end{lemma}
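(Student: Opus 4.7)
My proof proposal is to induct on $k$, following the case split in the recursive definition of $\monStarN{k}$: the two base cases $k = 0$ and $k = 1$ are handled by direct appeal to the hypothesis and to \cref{real2} respectively, and for the inductive step I shall reduce the arity-$(k+2)$ case to the arity-$(k+1)$ case.

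For $k = 0$, the term $\monStarN{0} \rr$ $\beta$-reduces to $\rr$, and the hypothesis specialised to the empty list of realizers gives $\rr \monRe \fb$. For $k = 1$, we have $\monStarN{1} \equiv \monStar$; the hypothesis on $\rr$ unfolds, by the implication clause of \cref{def:monadic_realizability_relation}, to exactly $\rr \re \fa_1 \limply \fb$, and \cref{real2} then yields $\monStar \rr \mon\ra_1 \monRe \fb$ for every $\mon\ra_1 \monRe \fa_1$.

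For the inductive step, suppose the lemma holds at arity $k+1$ and consider $\rr$ of arity $k+2$ satisfying the hypothesis, together with realizers $\mon\ra_1 \monRe \fa_1, \dotsc, \mon\ra_{k+2} \monRe \fa_{k+2}$. Unfolding the definition of $\monStarN{k+2}$ gives
\[ \monStarN{k+2} \rr \mon\ra_1 \mon\ra_2 \mon\ra_3 \dotsm \mon\ra_{k+2} \leadsto \monStarN{k+1} \rr' (\monMerge \mon\ra_1 \mon\ra_2) \mon\ra_3 \dotsm \mon\ra_{k+2}, \]
where $\rr' \equiv \abstr{z}{\m{\fa_1}\times\m{\fa_2}} \rr (\prl z)(\prr z)$. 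Property \cref{real3} provides $\monMerge \mon\ra_1 \mon\ra_2 \monRe \fa_1 \land \fa_2$. Moreover, whenever $z \re \fa_1 \land \fa_2$ the conjunction clause of $\re$ gives $\prl z \re \fa_1$ and $\prr z \re \fa_2$, so the hypothesis on $\rr$ shows that $\rr' z \ra_3 \dotsm \ra_{k+2} \monRe \fb$ for all inner realizers $\ra_3 \re \fa_3, \dotsc, \ra_{k+2} \re \fa_{k+2}$. Hence $\rr'$ satisfies the hypothesis of the lemma at arity $k+1$ for the tuple $\fa_1 \land \fa_2, \fa_3, \dotsc, \fa_{k+2}$, and the inductive hypothesis concludes.

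Rather than a genuine obstacle, the one point requiring care is the implicit use of closure of $\monRe$ under reduction (the step in the displayed line is really a $\beta$-reduction, not an equality), a standard preservation property that one verifies once and for all for any concrete monadic realizability relation. Beyond that, the argument is routine type-bookkeeping of substitutions and projections.
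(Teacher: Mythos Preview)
Your proposal is correct and follows essentially the same approach as the paper's proof: induction on $k$, with the base cases handled directly (trivially for $k=0$, via \cref{real2} for $k=1$) and the inductive step reducing $\monStarN{k+2}$ to $\monStarN{k+1}$ via \cref{real3} and the conjunction clause of $\re$. Your remark on closure of $\monRe$ under reduction makes explicit a point the paper treats silently.
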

\begin{proof}
  By induction on \(k\). 
  For \( k = 0 \) it is trivial and for \( k = 1 \) it follows from \cref{real2} since \( \monStarN{1} \equiv \monStar \). 
  Now we just need to prove that if the statement holds for some \( k \ge 1 \), it holds for \( k+1 \) too. 

  As in the statement we assume that, for all terms \( \ra_1 : \m{\fa_1}, \dotsc, \ra_{k+1} : \m{\fa_{k+1}} \) such that \( \ra_1 \re \fa_1, \dotsc, \ra_{k+1} \re \fa_{k+1} \): 
  \[ \rr \ra_1 \dotsm \ra_{k+1} \monRe \fb, \] 
  and that \( \mon\ra_1 : \mm{\fa_1}, \dotsc, \mon\ra_{k+1} : \mm{\fa_{k+1}} \) are terms such that \( \mon\ra_1 \monRe \fa_1, \dotsc, \mon\ra_{k+1} \monRe \fa_{k+1} \). 
  We need to show that: 
  \[ \monStarN{k+1} \rr \mon\ra_1 \dotsm \mon\ra_{k+1} \monRe \fb. \] 
  Since we know by definition of \(\monStarN{k+1}\) that \( \monStarN{k+1} \rr \mon\ra_1 \dotsm \mon\ra_{k+1} \) reduces to the term: 
  \[ \monStarN{k} (\abstr{z}{\m{\fa_1} \times \m{\fa_2}} \rr (\prl z) (\prr z) ) (\monMerge \mon\ra_1 \mon\ra_2) \mon\ra_3 \dotsm \mon\ra_{k+1}, \]
  and by \cref{real3} that \( \monMerge \mon\ra_1 \mon\ra_2 \monRe \fa_1 \land \fa_2 \),
  we see that we can use the inductive hypothesis on \(k\) to conclude. 
  In order to do so we have to show that the assumption of the inductive hypothesis holds, namely that, for any \( \ra_1 : \m{\fa_1} \times \m{\fa_2} \), \( \ra_3 : \m{\fa_3}, \dotsc, \ra_k : \m{\fa_k} \) such that \( \ra_1 \re \fa_1 \land \fa_2 \), \( \ra_2 \re \fa_2, \dotsc, \ra_k \re \fa_k \) it is the case that: 
  \[ (\abstr{z}{\m{\fa_1} \times \m{\fa_2}} \rr (\prl z) (\prr z) ) \ra_1 \dotsm \ra_k \monRe \fb. \] 
  By reducing the realizer we get that this is equivalent to: 
  \[ \rr (\prl \ra_1) (\prr \ra_1) \ra_2 \dotsm \ra_k \monRe \fb, \] 
  which is true by the assumption on \(\rr\) since \(\ra_1 \re \fa_1 \land \fa_2 \) means that \( \prl \ra_1 \re \fa_1 \) and \( \prr \ra_1 \re \fa_2 \) by definition of \(\re\). 
\end{proof}

We prove a similar property for \(\monRaiseN{k}\). 
\begin{lemma}[Monadic Realizability Property for \(\monRaiseN{k}\)] \label{thm:raiseN} 
  Let \( \fa_1, \dotsc, \fa_k \) and \(\fb\) be any formulas and let \( \rr : \m{\fa_1} \tarrow \dotsb \tarrow \m{\fa_k} \tarrow \m\fb \) be a term. 
  Assume that, for all terms \( \ra_1 : \m{\fa_1}, \dotsc, \ra_k : \m{\fa_k} \) such that \( \ra_1 \re \fa_1, \dotsc, \ra_k \re \fa_k \), it is the case that: 
  \[ \rr \ra_1 \dotsm \ra_k \re \fb. \] 
  Then, for all terms \( \mon\ra_1 : \mm{\fa_1}, \dotsc, \mon\ra_k : \mm{\fa_k} \) such that \( \mon\ra_1 \monRe \fa_1, \dotsc, \mon\ra_k \monRe \fa_k \), we have that: 
  \[ \monRaiseN{k} \rr \mon\ra_1 \dotsm \mon\ra_k \monRe \fb. \] 
\end{lemma}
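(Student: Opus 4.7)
The plan is to reduce this lemma to the analogous property for $\monStarN{k}$ (the preceding lemma in the paper) by unfolding the definition of $\monRaiseN{k}$ and applying property \ref{real1}.

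Recall that by definition
\[ \monRaiseN{k} \rr \equiv \monStarN{k} \bigl( \abstr{x_1}{\m{\fa_1}} \dotsm \abstr{x_k}{\m{\fa_k}} \monUnit (\rr x_1 \dotsm x_k) \bigr). \]
So it suffices to check that the term
\[ \rr' \equiv \abstr{x_1}{\m{\fa_1}} \dotsm \abstr{x_k}{\m{\fa_k}} \monUnit (\rr x_1 \dotsm x_k) \]
satisfies the hypothesis of Lemma~\ref{thm:starN}, namely that whenever $\ra_1 \re \fa_1, \dotsc, \ra_k \re \fa_k$ we have $\rr' \ra_1 \dotsm \ra_k \monRe \fb$. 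Once this is established, Lemma~\ref{thm:starN} applied to $\rr'$ and the given monadic realizers $\mon\ra_1, \dotsc, \mon\ra_k$ directly gives $\monStarN{k} \rr' \mon\ra_1 \dotsm \mon\ra_k \monRe \fb$, which is the desired conclusion.

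To verify the hypothesis, fix inner realizers $\ra_1 \re \fa_1, \dotsc, \ra_k \re \fa_k$. Then $\rr' \ra_1 \dotsm \ra_k$ reduces by $k$ consecutive $\beta$-steps to $\monUnit (\rr \ra_1 \dotsm \ra_k)$. By the hypothesis on $\rr$ in the statement, $\rr \ra_1 \dotsm \ra_k \re \fb$, and then property \ref{real1} (which says precisely that $\monUnit$ sends inner realizers to monadic realizers) yields $\monUnit (\rr \ra_1 \dotsm \ra_k) \monRe \fb$. Since the realizability relations are stable under reduction (in particular under the $\beta$-reductions just performed), we obtain $\rr' \ra_1 \dotsm \ra_k \monRe \fb$, as required.

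There is no real obstacle here beyond routine bookkeeping: the argument is essentially a one-step unfolding followed by two applications of results already proved (\cref{real1} and \cref{thm:starN}). The only point that requires a moment's attention is the implicit reliance on closure of $\monRe$ under the $\beta$-reductions used to simplify $\rr' \ra_1 \dotsm \ra_k$; this is a standard property built into the use of $\leadsto$ in \cref{def:monadic_realizability_relation} and is already invoked tacitly in the proof of Lemma~\ref{thm:starN}.
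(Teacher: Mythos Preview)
Your proof is correct and follows essentially the same approach as the paper's: unfold the definition of \(\monRaiseN{k}\), reduce the inner term by \(\beta\)-steps to \(\monUnit(\rr\,\ra_1\dotsm\ra_k)\), apply \cref{real1}, and conclude via \cref{thm:starN}. The only difference is cosmetic—you make explicit the closure of \(\monRe\) under reduction that the paper invokes tacitly.
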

\begin{proof}
  Assume that, for all terms \( \ra_1 : \m{\fa_1}, \dotsc, \ra_k : \m{\fa_k} \) such that \( \ra_1 \re \fa_1, \dotsc, \ra_k \re \fa_k \), it is the case that: 
  \[ \rr \ra_1 \dotsm \ra_k \re \fb, \] 
  and let \( \mon\ra_1 : \mm{\fa_1}, \dotsc, \mon\ra_k : \mm{\fa_k} \) be terms such that \( \mon\ra_1 \monRe \fa_1, \dotsc, \mon\ra_k \monRe \fa_k \). 
  We want to prove that: 
  \[ \monRaiseN{k} \rr \mon\ra_1 \dotsm \mon\ra_k \monRe \fb. \] 
  By definition of \(\monRaiseN{k}\) this reduces to: 
  \[ \monStarN{k} (\abstr{x_1}{\m{\fa_1}} \dotsm \abstr{x_k}{\m{A_k}} \monUnit (\rr x_1 \dotsm x_k)) \mon\ra_1 \dotsm \mon\ra_k \monRe \fb. \] 
  This follows by \cref{thm:starN} if we can show that, for any \( \ra_1 : \m{\fa_1}, \dotsc, \ra_k : \m{\fa_k} \) such that \( \ra_1 \re \fa_1, \dotsc, \ra_k \re \fa_k \), we have: 
  \[ (\abstr{x_1}{\m{\fa_1}} \dotsm \abstr{x_k}{\m{\fa_k}} \monUnit (\rr x_1 \dotsm x_k)) \ra_1 \dotsm \ra_k \monRe \fb. \]
  Reducing the realizer we get that this is equivalent to: 
  \[ \monUnit (\rr \ra_1 \dotsm \ra_k) \monRe \fb, \]
  and this follows by \cref{real1} and by assumption on \(\rr\). 
\end{proof}

\begin{proof}[Proof of \Cref{thm:ha_soundness}]
  We proceed by induction on the structure of the decorated version of \(\mathcal{D}\), that is, we assume that the statement holds for all decorated sub-derivations of \(\mathcal{D}\) and we prove that it holds for \(\mathcal{D}\) too. 
  More precisely we have to check the soundness of each decorated rule, showing that the validity of the premises yields the validity of the conclusion. 

  We start with some general notation and observations. 
  Let \( \Gamma \equiv \alpha_1 : \fa_1, \dotsc, \alpha_k : \fa_k \) for some \(k\). 
  Following the notation in \cref{def:mon_sem}, 
  we fix natural numbers \( n_1, \dotsc, n_l \) and terms \( \rr_1 : \fa_1, \dotsc, \rr_k : \fa_k \), 
  we define abbreviations: 
  \begin{align*}
    \Omega &\equiv x_1 \coloneqq \num n_1, \dotsc, x_l \coloneqq \num n_l, \\ 
    \Sigma &\equiv \alpha_1 \coloneqq r_1, \dotsc, \alpha_k \coloneqq r_k, 
  \end{align*}
  and we assume that: 
  \[ \rr_1 \re \fa_1[\Omega] \qquad \dotso \qquad \rr_k \re \fa_k[\Omega]. \]

  Note that if some term \(t : \ta_1 \tarrow \dotsb \tarrow \ta_k \tarrow \tb \) has no free variables then \( (t a_1 \dotsm a_k)[\Omega, \Sigma] \equiv t (a_1[\Omega, \Sigma]) \dotsm (a_k[\Omega, \Sigma]) \). 
  In particular this holds if \(t\) is one of \(\monStarN{k}\), \(\monRaiseN{k}\), \(\pair\), \(\prl\), \(\prr\), \(\case\), \(\inl\), \(\inr\). 
  The same holds for formulas, so \( (\fa \star \fb)[\Omega] \equiv \fa[\Omega] \star \fb[\Omega] \) where \(\star\) is one of \( \land, \lor \) or \( \limply \). 
  Also note that \( \m{\fa[\Omega]} = \m \fa \) since \( \m \cdot \) does not depend on the terms in \(\fa\). 
  In particular the types of the proof variables in \(\Gamma\) do not change, meaning we do not need to perform substitutions in \(\Gamma\).  
  We shall take advantage of these facts without mentioning it. 

  Now we can start showing that the rules are sound. \fixme{is the order of \(\Omega\) and \(\Sigma\) relevant? do we need to substitute \(\Omega\) in realizers?}

  \begin{itemize}
    \item[Id]
      We have to prove that: 
      \[ (\monRaiseN{0} \alpha_i)[\Omega, \Sigma] \monRe \fa[\Omega], \]
      where \( \fa = \fa_i \) for some \( i \in \{1, \dotsc, k\} \). 

      By performing the substitutions, we can rewrite the realizer as \(\monRaiseN{0} \rr_i\) so we need to prove that: 
      \[ \monRaiseN{0} \rr_i \monRe \fa. \]
      This follows by \cref{thm:raiseN} since by assumption \( \rr_i \re \fa_i[\Omega] \).

    \item[Atm]
      We have to prove that: \fixme{\(l\) is already used in this theorem}
      \[ (\monRaiseN{l} (\abstr{\gamma_1}\Unit \dotsm \abstr{\gamma_l}\Unit \unit) \mon\rr_1 \dotsm \mon\rr_l)[\Omega,\Sigma] \monRe \lafa[\Omega]. \]
      By performing the substitutions, we can rewrite the realizer as: 
      \[ \monRaiseN{l} (\abstr{\gamma_1}\Unit \dotsm \abstr{\gamma_l}\Unit \unit) \mon\rr_1[\Omega,\Sigma] \dotsm \mon\rr_l[\Omega,\Sigma]. \]
      By inductive hypothesis we know that 
      \[ \mon\rr_1[\Omega,\Sigma] \monRe \lafa_1[\Omega], \dotsc, \mon\rr_l[\Omega,\Sigma] \monRe \lafa_l[\Omega], \] 
      and thus we can conclude by \cref{thm:raiseN} if we can show that: 
      \[ (\abstr{\gamma_1}\Unit \dotsm \abstr{\gamma_l}\Unit \unit) \rr_1 \dotsm \rr_l \re \lafa[\Omega], \]
      for all \(\rr_1, \dotsc, \rr_l\) that are inner realizers of \(\lafa_1, \dotsc, \lafa_l\) respectively. 
      Since 
      \[ (\abstr{\gamma_1}\Unit \dotsm \abstr{\gamma_l}\Unit \unit) \rr_1 \dotsm \rr_l, \]
      reduces to \(\unit\) and \(\unit \re \lafa[\Omega]\) by definition of \(\re\) we are done. 
  \end{itemize}
  In the following we will apply the substitutions directly without mentioning it. 
  \begin{itemize}
    \item[\(\RuleNameI\land\)] 
      We have to prove that 
      \[ \monRaiseN{2} \pair \mon\ra[\Omega,\Sigma] \mon\rb[\Omega,\Sigma] \monRe \fa[\Omega] \land \fb[\Omega], \] 
      assuming that \( \mon\ra[\Omega, \Sigma] \monRe \fa[\Omega] \) and \( \mon\rb[\Omega, \Sigma] \monRe \fa[\Omega]\). 
      This follows by \cref{thm:raiseN} since
      \[ \pair \ra \rb \re \fa \land \fb, \] 
      for all inner realizers \(\ra\) of \(\fa\) and \(\rb\) of \(\fb\), 
      by definition of \(\re\). 

    \item[\( {\RuleNameE[L]\land} \)]
      We have to prove that 
      \[ (\monRaiseN{1} \prl \mon\rr)[\Omega,\Sigma] \monRe \fa[\Omega], \] 
      assuming that 
      \[ \mon\rr[\Omega,\Sigma] \monRe \fa[\Omega] \land \fb[\Omega]. \] 
      This follows by \cref{thm:raiseN} if 
      \[ \prl \rr \re \fa[\Omega], \] 
      for any inner realizer \(\rr\) of \(\fa[\Omega] \land \fb[\Omega]\).
      This is the case because from \( \rr \re \fa \land \fb \) if and only if \( \prl \rr \re \fa \) by definition of \(\re\). 

    \item[\( {\RuleNameE[R]\land} \)]
      Very similar to the proof for \(\RuleName[L]\land{E}\). 

    \item[\( {\RuleNameI[L]\lor} \)]
      We have to show that: 
      \[ \monRaiseN{1} \inl \mon\ra[\Sigma,\Omega] \monRe \fa[\Omega] \lor \fb[\Omega], \] 
      assuming that: 
      \[ \mon \ra[\Sigma,\Omega] \monRe \fa[\Omega]. \] 
      This follows by \cref{thm:raiseN} if 
      \[ \inl \ra \re \fa[\Omega], \] 
      for any inner realizer \(\ra\) of \(\fa[\Omega]\). 
      This is the case since \( \ra \re \fa[\Omega] \) if and only if \( \inl \ra \re \fa[\Omega] \lor \fb[\Omega] \) by definition of \(\re\). 

    \item[\( {\RuleNameI[R]\lor} \)]
      Very similar to the proof for \(\RuleName[L]\lor{I}\). 

    \item[\( \RuleNameE\lor \)] 
      We have to show that: 
      \[ \monStarN{1} (\abstr\gamma{\m\fa + \m\fb} \case \gamma (\abstr\alpha{\m\fa} \mon\ra[\Omega,\Sigma]) (\abstr\beta{\m\fb} \mon\rb[\Omega,\Sigma])) \mon\rr[\Omega,\Sigma] \monRe \fc[\Omega] \] 
      assuming by inductive hypothesis that: 
      \fixme{\(\alpha\) and \(\beta\) are inconsistent notation}
      \begin{enumerate} 
        \item \( \mon\rr[\Omega,\Sigma] \monRe \fa[\Omega] \lor \fb[\Omega] \), 
        \item \( \mon\ra[\Omega,\Sigma,\alpha \coloneqq \ra]\ \monRe \fc[\Omega] \) for any inner realizer \(\ra\) of \(\fa[\Omega]\), 
        \item \( \mon\rb[\Omega,\Sigma,\beta \coloneqq \rb]\ \monRe \fc[\Omega] \) for any inner realizer \(\rb\) of \(\fb[\Omega]\). 
      \end{enumerate} 
      We can conclude by \cref{thm:starN} if we show that 
      \[
        (\abstr\gamma{\m\fa + \m\fb} \case \gamma (\abstr\alpha{\m\fa} \mon\ra[\Omega,\Sigma]) (\abstr\beta{\m\fb} \mon\rb[\Omega,\Sigma])) \rr,  
      \]
      which \(\beta\)-reduces to
      \begin{equation}\label{eq:or_elim}
        \case \rr (\abstr\alpha{\m\fa} \mon\ra[\Omega,\Sigma]) (\abstr\beta{\m\fb} \mon\rb[\Omega,\Sigma]),  
      \end{equation}
      is a monadic realizer of \(\fc[\Omega]\) 
      for any inner realizer \(\rr\) of \(\fa[\Omega] \lor \fb[\Omega]\). 

      By definition of \(\re\), we know that either 
      \(\rr \leadsto \inl\ra\) where \(\ra\) is an inner realizer of \(\fa[\Omega]\) or 
      \(\rr \leadsto \inr\rb\) where \(\rb\) is an inner realizer of \(\fb[\Omega]\). 
      Assume that we are in the first case (the second case is analogous). 
      Then \eqref{eq:or_elim} becomes: 
      \[ \case (\inl\ra) (\abstr\alpha{\m\fa} \mon\ra[\Omega,\Sigma]) (\abstr\beta{\m\fb} \mon\rb[\Omega,\Sigma]), \]
      which reduces to 
      \[ (\abstr\alpha{\m\fa} \mon\ra[\Omega,\Sigma]) \ra, \]
      and to
      \[ \mon\ra[\Omega,\Sigma,\alpha \coloneqq \ra], \]
      which is a monadic realizer of \(\fc[\Omega]\) by inductive hypothesis. 

    \item[\( \RuleNameI\limply \)]
      We have to show that: 
      \[ \monRaiseN{0} (\abstr{\alpha_{k+1}}{\m\fa} \mon\rr[\Omega,\Sigma]) \monRe \fa[\Omega] \limply \fb[\Omega], \]
      assuming that: 
      \[ \mon \rr[\Omega,\Sigma,\alpha_{k+1} \coloneqq \ra] \monRe \fb[\Omega], \]
      for any inner realizer \(\ra\) of \(\fa[\Omega] \). 
      By \cref{thm:raiseN} it is enough to show that: 
      \[ \abstr{\alpha_{k+1}}{\m\fa} \mon\rr[\Omega,\Sigma] \re \fa[\Omega] \limply \fb[\Omega]. \]
      By definition of \(\re\) this holds if and only if: 
      \[ (\abstr{\alpha_{k+1}}{\m \fa} \mon\rr[\Omega,\Sigma]) \ra \monRe \fb[\Omega], \]
      for any inner realizer \(\ra\) of \(\fa[\Omega] \). 
      Reducing we get: 
      \[ \mon\rr[\Omega,\Sigma][\alpha_{k+1} \coloneqq \ra]) \monRe \fb[\Omega], \]
      and since \( \mon\rr[\Omega,\Sigma][\alpha_{k+1} \coloneqq \ra] \equiv \mon\rr[\Omega,\Sigma,\alpha_{k+1} \coloneqq \ra] \), \fixme{are the substitutions really correct? check this also for or elimination}
      we can conclude by the inductive hypothesis. 

    \item[\( \RuleNameE\limply \)] 
      We have to show that: 
      \[ (\monStarN{2} (\abstr{\gamma_1}{\m\fa \tarrow \m\fb} \abstr{\gamma_2}{\m\fa} \gamma_1 \gamma_2) \mon\rr[\Omega,\Sigma] \mon\ra[\Omega,\Sigma]) \monRe \fb[\Omega], \] 
      assuming by inductive hypothesis that: 
      \begin{enumerate}
        \item \( \mon\rr[\Omega,\Sigma] \monRe \fa[\Omega] \limply \fb[\Omega] \),
        \item \( \mon\ra[\Omega,\Sigma] \monRe \fa[\Omega] \). 
      \end{enumerate}
      This follows by \cref{thm:raiseN} if: 
      \[ (\abstr{\gamma_1}{\m\fa \tarrow \m\fb} \abstr{\gamma_2}{\m\fa} \gamma_1 \gamma_2) \rr \ra, \]
      which \(\beta\)-reduces to 
      \[ \rr \ra, \]
      is a monadic realizer of \( \fb[\Omega]\) 
      for any inner realizers \(\rr\) and \(\ra\) of \(\fa[\Omega] \limply \fb[\Omega]\) and \(\fa[\Omega] \) respectively. 
      This follows immediately by definition of \(\re\). 
  \end{itemize}
  In the following cases we assume that \(\Omega\) does not contain a substitution for the variable \(\lva\)
  and we write it explicitly when it is needed.
  \begin{itemize}
    \item[\( \RuleNameI\forall \)]
      We have to show that: 
      \[ \monRaiseN{0} (\abstr\lva\Nat \mon\rr[\Omega,\Sigma]) \monRe \qforall\lva \fa[\Omega], \]
      assuming by inductive hypothesis that: 
      \[ \mon\rr[\Omega,\lva \coloneqq \num n,\Sigma] \monRe \fa[\Omega,\lva \coloneqq \num n], \]
      for any natural number \(n\). 
      This follows by \cref{thm:raiseN} if: 
      \[ (\abstr\lva\Nat \mon\rr[\Omega,\Sigma]) \re \qforall\lva \fa[\Omega], \]
      which by definition of \(\re\) means that:
      \[ (\abstr\lva\Nat \mon\rr[\Omega,\Sigma]) \num n  \re \fa[\Omega, \lva \coloneqq \num n], \]
      for any natural number \(n\).
      By \(\beta\)-reducing we get:
      \[ \mon\rr[\Omega,\lva \coloneqq \num n,\Sigma]  \re \fa[\Omega, \lva \coloneqq \num n], \]
      which holds by inductive hypothesis. 

    \item[\( \RuleNameE\forall \)]
      We have to show that: 
      \[ (\monStarN{1} (\abstr\gamma{\Nat \tarrow \mm\fa} \gamma (\lta[\Omega]))) \mon\rr[\Omega,\Sigma] \monRe (\fa\subst\lva\lta)[\Omega], \]
      assuming by inductive hypothesis that: 
      \[ \mon\rr[\Omega,\Sigma] \monRe \qforall\lva \fa[\Omega]. \]
      This follows by \cref{thm:starN} if: 
      \[ (\abstr\gamma{\Nat \tarrow \mm\fa} \gamma (\lta[\Omega]))) \rr \leadsto \rr (\lta[\Omega]), \]
      is a monadic realizer of \(\fa[\Omega]\), for any inner realizer \(\rr\) of \(\qforall\lva \fa[\Omega]\).
      This follows by definition of \(\re\) for \(\rr \re \qforall\lva \fa[\Omega]\), 
      since \(\lta[\Omega]\) is closed and thus reduces to a numeral. 
      \fixme{needs normalization of arithmetic terms}

    \item[\( \RuleNameI\exists \)]
      We have to show that: 
      \[ \monRaiseN{1} (\abstr\gamma{\m\fa} \pair\lta[\Omega]\gamma) \mon\rr[\Omega,\Sigma] \monRe \qexists\lva \fa[\Omega], \]
      assuming by inductive hypothesis that: 
      \[ \mon\rr[\Omega,\Sigma] \monRe \fa[\Omega,\lva\coloneqq\lta]. \] 
      This follows by \cref{thm:raiseN} if: 
      \[ (\abstr\gamma{\m\fa} \pair\lta[\Omega]\gamma) \rr \leadsto \pair\lta[\Omega]\rr \]
      is an inner realizer of \( \qexists\lva \fa[\Omega] \), for any inner realizer \(\rr\) of \( \fa[\Omega,\lva\coloneqq\lta] \). 
      This follows by definition of \(\re\). 

    \item[\( \RuleNameE\exists \)] 
      We have to show that: 
      \[ \monStarN{1} (\abstr\gamma{\Nat \times \m\fa} (\abstr\lvb\Nat \abstr\alpha{\m\fa} \mon\rr_2[\Omega,\Sigma])(\prl \gamma)(\prr \gamma)) \mon\rr_1[\Omega,\Sigma] \monRe \fc[\Omega], \]
      assuming by inductive hypothesis that: 
      \begin{enumerate}
        \item \( \mon\rr_1[\Omega,\Sigma] \monRe \qexists\lva \fa[\Omega] \),
        \item \( \mon\rr_2[\Omega,\lvb \coloneqq \num n,\Sigma, \alpha \coloneqq \rr] \monRe \fc[\Omega] \), for any natural number \(n\) and any inner realizer \(\rr\) of \(\fa[\Omega]\).
      \end{enumerate}
      This follows by \cref{thm:starN} and by the inductive hypothesis on \(\mon \rr_1\) if, for any inner realizer \(\rr_1\) of \(\qexists\lva \fa[\Omega]\): 
      \begin{align*}
        &\phantomrel\leadsto (\abstr\gamma{\Nat \times \m\fa} (\abstr\lvb\Nat \abstr\alpha{\m\fa} \mon\rr_2[\Omega,\Sigma])(\prl \gamma)(\prr \gamma)) \rr_1 \leadsto \\ 
        &\leadsto (\abstr\lvb\Nat \abstr\alpha{\m\fa} \mon\rr_2[\Omega,\Sigma])(\prl \rr_1)(\prr \rr_1) \leadsto \\ 
        &\leadsto ((\mon\rr_2[\Omega,\Sigma])\subst\lvb{\prl \rr_1})\subst\alpha{\prr \rr_1} \equiv \\ 
        &\equiv \mon\rr_2[\Omega,\lvb \coloneqq \prl \rr_1,\Sigma, \alpha \coloneqq \prr \rr_1].
      \end{align*}
      is a monadic realizer of \(\fc[\Omega]\).
      By definition of \(\re\) we have that \(\prr \rr_1 \re \fa\subst\lva{\prl \rr_1} \) and thus we can conclude by the inductive hypothesis on \(\mon \rr_2\). 
    \item[Ind] 
      We have to show that: 
      \[ (\monRaiseN{0} (\totRec\infty f))[\Omega, \Sigma] \monRe (\qforall\lva \fa)[\Omega], \] 
      assuming that, for all naturals numbers \(n\) and for all \( \ra : \Nat \tarrow T(\Unit \tarrow T\m\fa) \) such that \( \ra \re \qforall\lvc \lvc < \num n \limply \fa[\lva \coloneqq \lvc] \): 
    \[ \rr[\Omega, \lvb \coloneqq \num n, \Sigma, \alpha_{k+1}] \coloneqq \ra] \monRe \fa[\lva \coloneqq \lvb][\Omega, \lvb \coloneqq \num n]. \] 
    Note that \( \fa[\lva \coloneqq \lvb][\Omega, \lvb \coloneqq \num n] \) is just \( \fa[\Omega, \lva \coloneqq \num n] \). 
    By \cref{thm:raiseN} we get the conclusion if \( \totRec\infty f[\Omega,\Sigma] \re \qforall\lva A[\Omega] \), which by definition of \(\re\) means that 
    \[ \totRec\infty f[\Omega,\Sigma] \num n \monRe A[\Omega, \lva \coloneqq \num n] \] 
    for any natural number \(n\). 
    In order to show this we shall prove that for any natural number \(n\) and any \(\omega \in \N \cup \{\infty\}\) such that either \( \omega = \infty \) or \( \omega > n \), we have:
    \[ \totRec\omega f[\Omega,\Sigma] \num n \monRe A[\Omega, \lva \coloneqq \num n]. \] 
    We proceed by complete induction on \(n\), so we assume that the statement holds for all natural numbers \(m\) such that \( m < n \). 
    We begin by reducing the realizer (in the first step we use the assumption on \(\omega\): 
    \begin{align*} 
      \totRec\omega f[\Omega, \Sigma] \num n 
      &\leadsto f[\Omega, \Sigma] \num n (\totRec{n} f[\Omega, \Sigma]) \\ 
      &\leadsto (\abstr\alpha{} \mon\rr[\Omega, \lvb \coloneqq \num n]) (\abstr\lvc\Nat \monRaiseN{0} (\abstr{\_}\Unit \totRec{n} f[\Omega, \Sigma] \lvc)) \\ 
      &\leadsto \mon\rr[\Omega, \lvb \coloneqq \num n, \Sigma, \alpha \coloneqq \abstr\lvc\Nat \monRaiseN{0} (\abstr{\_}\Unit \totRec{n} f[\Omega, \Sigma] \lvc)] 
    \end{align*} 
    Then we have to show that: 
    \[ \mon\rr[\Omega, \lvb \coloneqq \num n, \Sigma, \alpha \coloneqq \abstr\lvc\Nat \monRaiseN{0} (\abstr{\_}\Unit \totRec{n} f[\Omega, \Sigma] \lvc)] \monRe A[\Omega, \lva \coloneqq \num n]. \] 
    This follows from the inductive hypothesis on the premise of the complete induction rule if we can show that: 
    \[ \abstr\lvc\Nat \monRaiseN{0} (\abstr{\_}\Unit \totRec{n} f[\Omega, \Sigma] \lvc) \re \qforall\lvc \lvc < \num n \limply \fa[\lva \coloneqq \lvc]. \] 
    By definition of \(\re\) this is the case if: 
    \[ \monRaiseN{0} (\abstr{\_}\Unit \totRec{n} f[\Omega, \Sigma] \num m) \monRe \num m < \num n \limply \fa[\lva \coloneqq \num m], \]
    for all  natural numbers \(m\). By \cref{real1} this follows from: 
    \[ \abstr{\_}\Unit \totRec{n} f[\Omega, \Sigma] \num m \re \num m < \num n \limply \fa[\lva \coloneqq \num m]. \]
    Again by definition of \(\re\) this is equivalent to showing that for any \( u : \Unit \) such that \( u \re \num m < \num n \) we have: 
    \[ \totRec{n} f[\Omega, \Sigma] \num m \re \fa[\lva \coloneqq \num m]. \]
    Note that, since \( u : \Unit \), \fixme{by normal form theorem}
    \( u \leadsto \unit \), so there are two possible cases: 
    either \( m < n \) is true and then \( u \re \num m < \re \num n \) for any \( u : \Unit \) or \( m < n \) is false and no \( u : \Unit \) can realize \( \num m < \num n \). 
    In both cases the statement holds: 
    in the former case by inductive hypothesis on \(m\) and in the latter case trivially since the universal quantification on \(u\) is empty. 
\end{itemize}
\end{proof} 

\subsection{Proofs Omitted from \Cref{sec:monadic_interactive_realizability}} 

We did not check that \cref{def:ir_monad} is correct and that \(\IR\) actually is a syntactic monad. 
\begin{lemma}[The Syntactic Monad \(\IR\)] \label{thm:ir_monad}
  \(\IR\) is a syntactic monad. 
\end{lemma}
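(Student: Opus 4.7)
The plan is to verify the three monad equations \cref{mon1,mon2,mon_merge1} directly by unfolding the definitions of $\monUnit$, $\monStar$ and $\monMerge$ from \cref{def:ir_monad} and performing $\beta$- and $+$-reductions. All three checks are essentially mechanical, but \cref{mon1} will require care about the reduction discipline.

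For \cref{mon2}, I first $\beta$-reduce $\monStar f (\monUnit \tta)$ to $\abstrS{s} \case (\monUnit \tta\ s)\ (\abstr\tta\ta f\ \tta\ s)\ \inr$; then $\monUnit \tta\ s \leadsto_\beta \inl \tta$ fires the $+$-reduction of the outer $\case$, giving $\abstrS{s} f\ \tta\ s$, which is $f\ \tta$ up to $\eta$-equivalence for the arrow type. For \cref{mon_merge1}, I unfold $\monMerge (\monUnit \tta)(\monUnit \ttb)$ and $\beta$-reduce: the outer scrutinee reduces to $\inl \tta$, selecting the first branch; the nested scrutinee reduces to $\inl \ttb$, again selecting the first branch; and the body evaluates to $\inl(\pair \tta \ttb)$, so the whole term becomes $\abstrS{s} \inl(\pair \tta \ttb)$, which is definitionally $\monUnit (\pair \tta \ttb)$.

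The delicate case is \cref{mon1}. Unfolding $\monStar \monUnit \mon\tta$ and $\beta$-reducing the inner $\monUnit$ produces
\[ \abstrS{s} \case (\mon\tta\ s)\ (\abstr\tta\ta \inl \tta)\ \inr, \]
which is not a $\beta,+$-redex because $\mon\tta\ s$ is an arbitrary term of type $\ta+\Ex$. Concluding that this equals $\mon\tta$ needs the $\eta$-law for sums, namely that $\case r\ \inl\ \inr \leadsto r$ for $r:\ta+\Ex$, together with $\eta$ for $\lambda$. This is the main (and only real) obstacle: provided one reads $\leadsto$ in \cref{def:syntactic_monad} as $\beta\eta$-convertibility, which is the standard convention when stating monad laws in a typed $\lambda$-calculus, the verification is immediate. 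Otherwise one must interpret the laws extensionally, which suffices for the use made of them in \cref{thm:ir_monadic_realizability,thm:realizer_for_em}, since the interactive realizability semantics only observes monadic realizers by feeding them a state and pattern-matching on the resulting value.
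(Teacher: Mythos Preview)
Your proposal is correct and follows exactly the same route as the paper: unfold the definitions and reduce, with \cref{mon2} and \cref{mon_merge1} going through by $\beta$- and $+$-reductions, and \cref{mon1} requiring in addition the $\eta$-law for sums ($\case\,r\,\inl\,\inr = r$) together with $\eta$ for $\lambda$. The paper's proof invokes precisely these two equalities at the end of the \cref{mon1} verification (writing them as $=_\times$ and $=_\eta$), so your explicit flagging of this point is apt and matches what the paper does tacitly.
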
 
\begin{proof}\allowdisplaybreaks[1]
  We just need to check that \(\monUnit\), \(\monStar\) and \(\monMerge\) satisfy all the properties in \cref{def:syntactic_monad}. 
  This amounts to perform some reductions. 
  \begin{itemize}
    \item[\ref{mon1}]
      Given any \(\mon\tta : \T\ta\), we have:
      \begin{align*}
        &\monStar[\ta,\ta] \monUnit[\ta] \mon\tta \\
        &\equiv 
        (\abstr{f}{\ta \tarrow \T\ta} \abstr{\mon \tta}{\T\ta} \abstrS{s} \case[\ta,\Ex,\ta+\Ex](\mon \tta s) (\abstr\tta\ta f \tta s) \inr[\ta,\Ex])
        \monUnit[\ta] \mon\tta \\ 
        &\reducesto_\beta 
        (\abstr{\mon\tta}{\T\ta} \abstrS{s} \case[\ta,\Ex,\ta+\Ex](\mon \tta s) (\abstr\tta\ta \monUnit[\ta] \tta s) \inr[\ta,\Ex])
        \mon\tta \\ 
        &\reducesto_\beta 
        \abstrS{s} \case[\ta,\Ex,\ta+\Ex](\mon\tta s) 
        (\abstr\tta\ta \monUnit[\ta] \tta s) 
        \inr[\ta,\Ex] \\ 
        &\equiv
        \abstrS{s} \case[\ta,\Ex,\ta+\Ex](\mon\tta s) 
        (\abstr\tta\ta (\abstr\tta\ta \abstrS{\_} \inl[\ta,\Ex] \tta) \tta s) 
        \inr[\ta,\Ex] \\ 
        &\reducesto_\beta 
        \abstrS{s} \case[\ta,\Ex,\ta+\Ex](\mon\tta s) 
        (\abstr\tta\ta (\abstrS{\_} \inl[\ta,\Ex] \tta) s) 
        \inr[\ta,\Ex] \\ 
        &\reducesto_\beta 
        \abstrS{s} \case[\ta,\Ex,\ta+\Ex](\mon\tta s) 
        (\abstr\tta\ta \inl[\ta,\Ex] \tta) 
        \inr[\ta,\Ex] \\ 
        &=_\eta
        \abstrS{s} \case[\ta,\Ex,\ta+\Ex](\mon\tta s) 
        \inl[\ta,\Ex] \inr[\ta,\Ex] \\ 
        &=_\times
        \abstrS{s} \mon\tta s \\
        &=_\eta
        \mon\tta, 
      \end{align*}
      as required by \cref{mon1}.
    \item[\ref{mon2}] 
      Given any \( f : \ta \tarrow \T\tb \) and \(\tta : \ta\), we have:
      \begin{align*}
        &\monStar[\ta,\tb] f (\monUnit[\ta] \tta) \\ 
        &\equiv 
        (\abstr{f}{\ta \tarrow \T\tb} \abstr{\mon\tta}{\T\ta} \abstrS{s} \case[\ta,\Ex,\tb+\Ex](\mon\tta s) (\abstr\tta\ta f \tta s) \inr[\tb,\Ex]) f 
        (\monUnit[\ta] \tta) \\ 
        &\reducesto_\beta 
        (\abstr{\mon\tta}{\T\ta} \abstrS{s} \case[\ta,\Ex,\tb+\Ex](\mon\tta s) (\abstr\tta\ta f \tta s) \inr[\tb,\Ex])
        (\monUnit[\ta] \tta) \\ 
        &\reducesto_\beta 
        \abstrS{s} \case[\ta,\Ex,\tb+\Ex] 
        (\monUnit[\ta] \tta s) 
        (\abstr\tta\ta f \tta s) 
        \inr[\tb,\Ex] \\ 
        &\equiv
        \abstrS{s} \case[\ta,\Ex,\tb+\Ex] 
        ((\abstr\tta\ta \abstrS{\_} \inl[\ta,\Ex] \tta) \tta s) 
        (\abstr\tta\ta f \tta s) 
        \inr[\tb,\Ex] \\ 
        &\reducesto_\beta 
        \abstrS{s} \case[\ta,\Ex,\tb+\Ex] 
        (\inl[\ta,\Ex] \tta) 
        (\abstr\tta\ta f \tta s) 
        \inr[\tb,\Ex] \\ 
        &\reducesto_\times 
        \abstrS{s} (\abstr\tta\ta f \tta s) \tta \\
        &\reducesto_\beta 
        \abstrS{s} f \tta s \\
        &=_\eta
        f \tta,
      \end{align*}
      as required by \cref{mon2}.
    \item[\ref{mon_merge1}] 
      Given any \(\tta : \ta \) and \( \ttb : \tb\), we have:
      \begin{align*}
        \monMerge &(\monUnit \tta) (\monUnit \ttb) \\
                  &\equiv (\abstr{\mon \tta}{\T\ta} \abstr{\mon \ttb}{\T\tb} \abstrS{s} \case[\ta,\Ex,(\ta \times \tb)+\Ex] (\mon \tta s)  \\ 
                  &\phantomrel\equiv \quad (\abstr\tta\ta \case[\tb,\Ex,(\ta \times \tb)+\Ex] (\mon \ttb s) (\abstr\ttb\tb \inl[\ta \times \tb,\Ex] (\pair\tta\ttb)) \inr[\ta \times \tb,\Ex]) \\ 
                  &\phantomrel\equiv \quad 
        (\abstr{e_1}\Ex \case[\tb,\Ex,(\ta \times \tb)+\Ex] (\mon\ttb s) (\abstr\tta\tb \inr[\ta \times \tb,\Ex] e_1)(\abstr{e_2}\Ex \inr[\ta \times \tb,\Ex] (\exmerge e_1 e_2) )) ) \\ 
        &\phantomrel\equiv \quad 
        (\monUnit \tta) (\monUnit \ttb) \\  
        &\reducesto_\beta \abstrS s \case[\ta,\Ex,(\ta \times \tb)+\Ex] (\monUnit \tta s)  \\ 
        &\phantomrel{\reducesto_\beta} (\abstr\tta\ta \case[\tb,\Ex,(\ta \times \tb)+\Ex] (\monUnit \ttb s) (\abstr\ttb\tb \inl[\ta \times \tb,\Ex] (\pair\tta\ttb)) \inr[\ta \times \tb,\Ex]) (\dotso) \\ 
        &\equiv \abstrS s \case[\ta,\Ex,(\ta \times \tb)+\Ex] ((\abstr\tta\ta \abstrS{\_} \inl[\ta,\Ex] \tta) \tta s)  \\ 
        &\phantomrel{\reducesto_\beta} (\abstr\tta\ta \case[\tb,\Ex,(\ta \times \tb)+\Ex] (\monUnit \ttb s) (\abstr\ttb\tb \inl[\ta \times \tb,\Ex] (\pair\tta\ttb)) \inr[\ta \times \tb,\Ex]) (\dotso) \\
        &\reducesto_\beta \abstrS s \case[\ta,\Ex,(\ta \times \tb)+\Ex] (\inl[\ta,\Ex] \tta) \\ 
        &\phantomrel{\reducesto_\beta} (\abstr\tta\ta \case[\tb,\Ex,(\ta \times \tb)+\Ex] (\monUnit \ttb s) (\abstr\ttb\tb \inl[\ta \times \tb,\Ex] (\pair\tta\ttb)) \inr[\ta \times \tb,\Ex]) (\dotso) \\
        &\reducesto_+ \abstrS s \case[\tb,\Ex,(\ta \times \tb)+\Ex] (\monUnit \ttb s) (\abstr\ttb\tb \inl[\ta \times \tb,\Ex] (\pair\tta\ttb)) \inr[\ta \times \tb,\Ex] \\
        &\equiv \abstrS s \case[\tb,\Ex,(\ta \times \tb)+\Ex] ((\abstr\ttb\tb \abstrS{\_} \inl[\tb,\Ex] \ttb) \ttb s) (\abstr\ttb\tb \inl[\ta \times \tb,\Ex] (\pair \tta \ttb)) \inr[\ta \times \tb,\Ex] \\
        &\reducesto_\beta \abstrS s \case[\tb,\Ex,(\ta \times \tb)+\Ex] (\inl[\tb,\Ex] \ttb) (\abstr\ttb\tb \inl[\ta \times \tb,\Ex] (\pair \tta \ttb)) \inr[\ta \times \tb,\Ex] \\
        &\reducesto_+ \abstrS s (\abstr\ttb\tb \inl[\ta \times \tb,\Ex] (\pair\tta\ttb)) \ttb \\
        &\reducesto_\beta \abstrS s \inl[\ta \times \tb,\Ex] (\pair \tta \ttb) \\ 
        &\equiv \monUnit[\ta \times \tb] (\pair \tta \ttb), 
      \end{align*}
      as required by \cref{mon_merge1}.
  \end{itemize}
\end{proof}

\begin{proof}[Proof of \Cref{thm:ir_monadic_realizability}]
  Let \(s\) be any state. 
  We have to show that \(\monRe^s\) satisfies the properties in \cref{def:monadic_realizability_relation}. 

  \begin{description}
    \item[\ref{real1}]
      We begin with \cref{real1}, namely, for any inner interactive realizer \(\rr\) of a formula \(\fa\) with respect to \(s\), we show that: 
      \[ \monUnit \rr \monRe^s \fa. \] 
      By unfolding the definition of \(\monUnit\) we have that:
      \begin{align*}
        \monUnit \rr s &\leadsto (\abstr{\_}\State \inl \rr) s \\ 
                       &\leadsto \inl \rr, 
      \end{align*}
      thus, by definition of \(\monRe^s\), we have to check that \( \rr \re^s \fa \), which holds by assumption. 

    \item[\ref{real2}]
      In order to show \cref{real2}, for any formulas \(\fa\) and \(\fb\), we take an inner interactive realizer \(\rr\) of \(\fa \limply \fb\) with respect to \(s\), 
      that is, a term \( \rr : \m\fa \tarrow \mm\fb \) such that \( \rr \ra \) is a monadic interactive realizer of \(\fb\) with respect to \(s\), for any inner interactive realizer \(\ra\) of \(\fa\) with respect to \(s\). 
      Then we have to show that, given a monadic interactive realizer \( \mon\ra \) of \(\fa\) with respect to \(s\), we have: 
      \[ \monStar \rr \mon\ra \monRe^s \fb. \]
      By definition of \(\monRe^s\) we apply \(s\) to the realizer and 
      by unfolding the definition of \(\monStar\) and reducing we get:
      \begin{equation} \label{eq:mr2r} 
        \monStar \rr \mon\ra s \leadsto 
        \case (\mon\ra s) (\abstr\tta{\m\fa} \rr \tta s) \inr. 
      \end{equation}
      Since \(\mon\ra \monRe^s \fa \), we know that \(\mon\ra s\) reduces to either a regular value \(\inl \ra\), for some inner realizer \(\ra\) of \(\fa\) with respect to \(s\), or an exceptional value \(\inr e\), for some exception \(e\) that properly extends \(s\). 
      \begin{itemize}
        \item In the former case, \eqref{eq:mr2r} reduces to \( \rr \ra s \).
          By the assumptions we made on \(\rr\) and \(\ra\), 
          \(\rr \ra\) is a monadic interactive realizer of \(\fb\) with respect to \(s\), 
          and thus \(\rr \ra s\) reduces to either 
          a regular value which is an inner interactive realizer of \(\fb\) with respect to \(s\) or 
          an exceptional value which properly extends \(s\). 
          Thus \(\monStar \rr \mon\ra\) is a monadic interactive realizer of \(\fb\) with respect to \(s\) as required.
        \item In the latter case, \eqref{eq:mr2r} reduces to \(\inr e\). 
          Since \(e\) properly extends \(s\), \(\monStar \rr \mon\ra\) is again a monadic interactive realizer of \(\fb\) with respect to \(s\) as required.
      \end{itemize}

    \item[\ref{real3}]
      Finally we have to show \cref{real3}. 
      We assume that \(\mon\ra\) and \(\mon\rb\) are monadic interactive realizers of \(\fa\) and \(\fb\) respectively, both with respect to \(s\).
      Then we have to show that:
      \[ \monMerge \mon\ra \mon\rb \monRe^s \fa \land \fb. \] 
      By definition of \(\monRe^s\), this means we have to show that 
      \[ \monMerge \mon\ra \mon\rb s \]  
      reduces to either a regular value which is an inner interactive realizers 
      Since \(\mon\ra\) and \(\mon\rb\) are monadic interactive realizers, \(\mon\ra s\) and \(\mon\rb s\) either reduce to regular values \(\inl \ra\) and \(\inl \rb\), where \(\ra\) and \(\rb\) are inner interactive realizers of respectively \(\fa\) and \(\fb\) with respect to \(s\), or to exceptional values \(\inr e_1\) and \(\inr e_2\), where \(e_1\) and \(e_2\) properly extend \(s\).
      By unfolding the definition of \(\monMerge\) and reducing we get:
      \begin{equation} \label{eq:mr3r}
        \begin{split}
          \monMerge \mon\ra \mon\rb s 
          &\leadsto 
          \case (\mon\ra s) (\abstr\tta{\m\fa} \case (\mon\rb s) (\abstr\ttb{\m\fb} \inl (\pair \tta \ttb)) \inr) \\ 
          &\phantomrel\leadsto 
          (\abstr{e_1}\Ex \case (\mon\rb s) (\abstr\_{\m\fb} \inr e_1)(\abstr{e_2}\Ex \inr (\exmerge e_1 e_2) ))
        \end{split}
      \end{equation}
      We distinguish four cases depending on how \(\mon\ra s\) and \(\mon\rb s\) reduce:
      \begin{description}
        \item[\( \mon\ra s \leadsto \inl \ra \) and \( \mon\rb s \leadsto \inl \rb \)]
          In this case \eqref{eq:mr3r} reduces as follows:
          \begin{align*}
            \monMerge \mon\ra \mon\rb s 
            &\leadsto \case (\mon\rb s) (\abstr\ttb{\m\fb} \inl (\pair \ra \ttb)) \inr \\ 
            &\leadsto \inl (\pair \ra \rb). 
          \end{align*}
          Since it is a regular value, we have to show that \(\pair \ra \rb \re^s \fa \land \fb \). 
          This follows by definition of \(\re^s\) and from the assumption that  \(\ra \re^s \fa\) and \(\rb \re^s \fb\). 
        \item[\( \mon\ra s \leadsto \inl \ra \) and \( \mon\rb s \leadsto \inr e_2 \)]
          In this case \eqref{eq:mr3r} reduces as follows:
          \begin{align*}
            \monMerge \mon\ra \mon\rb s 
            &\leadsto \case (\mon\rb s) (\abstr\ttb{\m\fb} \inl (\pair \ra \ttb)) \inr \\ 
            &\leadsto \inr e_2. 
          \end{align*}
          Since it is an exception value, we have to show that \(e_2\) properly extends \(s\). 
          This follows by the assumption that \(\rb \re^s \fb\). 
        \item[\( \mon\ra s \leadsto \inr e_1 \) and \( \mon\rb s \leadsto \inl \rb \)]
          In this case \eqref{eq:mr3r} reduces as follows:
          \begin{align*}
            \monMerge \mon\ra \mon\rb s 
            &\leadsto \case (\mon\rb s) (\abstr\_{\m\fb} \inr e_1) (\abstr{e_2}\Ex \inr (\exmerge e_1 e_2)) \\ 
            &\leadsto \inr e_1
          \end{align*}
          Since it is an exception value, we have to show that \(e_1\) properly extends \(s\). 
          This follows by the assumption that \(\ra \re^s \fa\). 
        \item[\( \mon\ra s \leadsto \inr e_1 \) and \( \mon\rb s \leadsto \inr e_2 \)]
          \begin{align*}
            \monMerge \mon\ra \mon\rb s 
            &\leadsto \case (\mon\rb s) (\abstr\_{\m\fb} \inr e_1) (\abstr{e_2}\Ex \inr (\exmerge e_1 e_2)) \\ 
            &\leadsto \inr (\exmerge e_1 e_2)
          \end{align*}
          Since it is an exception value, we have to show that \( \exmerge e_1 e_2 \) properly extends \(s\). 
          By \cref{exmerge_prop}, this happens whenever both \(e_1\) and \(e_2\) properly extends \(s\). 
          This is the case by the assumption that \(\ra \re^s \fa\) and \(\rb \re^s \fb\). \qedhere
      \end{description}
  \end{description}
\end{proof} 

\begin{proof}[Proof of \Cref{thm:realizer_for_em}]
  Let \(\mon\rr\) and \(\fa\) stand for \(\mon\emReal(\lafa, t_1, \dotsc, t_k)\) and \(\EM(\lafa, t_1, \dotsc, t_k)\) in the following proof. 
  By \cref{def:interactive_realizability_semantics}, we have to prove that \eqref{eq:em_seq} is valid with respect to the semantics induced by \(\monRe^s\) for any given state \(s\). 

  Let the free (arithmetic) variables of \(\fa\) be \(\lva_1, \dotsc, \lva_m\) and let \(\Omega \equiv \lva_1 \coloneqq \num n_1\), \(\dotsc\), \(\lva_m \coloneqq \num n_m\) be a substitution for them. 
  Let \(\Sigma\) be a substitution for the assumption variables in \(\Gamma\). 
  Note that the only free variables in \(\mon\rr\) are arithmetic, thus \(\mon\rr[\Sigma]\) is the same as \(\mon\rr\).

  Thus we have to prove that 
  \[ \mon\rr[\Sigma,\Omega] \monRe^s \fa[\Omega]. \] 
  By definition of \(\monRe^s\), we apply \(s\) and reduce:
  \begin{align*}
    \mon\rr[\Sigma,\Omega] s \leadsto 
    \inl (\case &(\query_\lafa s t_1[\Omega] \dotsm t_k[\Omega]) \\
                &(\abstr\_\Unit \inl (\abstr\lvb\Nat \abstr\_\State \eval_\lafa t_1[\Omega] \dotsm t_k[\Omega] \lvb)) \\
                &(\abstr\lvb\Nat \inr (\pair \lvb \monUnit))), 
  \end{align*}
  and since \(\mon\rr[\Sigma,\Omega] s\) is a regular value, \(\mon\rr[\Sigma,\Omega]\) is a monadic realizer of \(\fa\) if and only if:
  \begin{equation} \label{eq:emr1}
    \begin{split}
      \case &(\query_\lafa s t_1[\Omega] \dotsm t_k[\Omega]) \\
            &(\abstr\_\Unit \inl (\abstr\lvb\Nat \abstr\_\State \eval_\lafa t_1[\Omega] \dotsm t_k[\Omega] \lvb)) \\
            &(\abstr\lvb\Nat \inr (\pair \lvb \monUnit)). 
    \end{split} 
  \end{equation}
  is an inner realizer for \(\fa\). 
  \( \query_\lafa s t_1[\Omega] \dotsm t_k[\Omega] \) reduces either to \(\inl \unit\) or to \(\inr \num n\) for some natural number \(n\). 
  We distinguish the two cases. 
  \begin{itemize}
    \item[\(\inl \unit\)]
      In the first case \eqref{eq:emr1} reduces to:
      \[
        \inl (\abstr\lvb\Nat \abstr\_\State \eval_\lafa t_1[\Omega] \dotsm t_k[\Omega] \lvb). 
      \]
      By definition of \(\re^s\), this is an inner realizer for \(\fa\) if  and only if:
      \[ 
        \rr_\forall \equiv \abstr\lvb\Nat \abstr\_\State \eval_\lafa t_1[\Omega] \dotsm t_k[\Omega] \lvb, 
      \]
      is an inner realizer for \(\qforall\lvb \lafa(t_1[\Omega], \dotsc, t_k[\Omega], \lvb)\). 
      Again by definition of \(\re^s\), this is the case if and only if 
      \[ 
        \rr_\forall \num n \monRe^s \lafa (t_1[\Omega], \dotsc, t_k[\Omega], \num n), 
      \] 
      for any natural number \(n\).
      Following the definition of \(\monRe^s\), we apply \(s\) to \(\rr_\forall \num n\) and reduce:
      \[ \rr_\forall \num n s \leadsto \eval_\lafa t_1[\Omega] \dotsm t_k[\Omega] \num n \]
      Then \( \rr_\forall \num n s \) reduces either to \(\inl \unit\) or to \(\inr e\), for some exception \(e\). 
      \begin{itemize}
        \item[\(\inl \unit\)]
          In the first case, we have to check that: 
          \[ \unit \re^s \lafa (t_1[\Omega], \dotsc, t_k[\Omega], \num n) \]
          By definition of \(\re^s\), this is the case if and only if \(\lafa (t_1[\Omega], \dotsc, t_k[\Omega], \num n) \) and this follows from \cref{eval_prop}. 
        \item[\(\inr e\)]
          In the second case, by definition of \(\monRe^s\), 
          we have to check that \(e\) properly extends \(s\) and 
          this follows from \cref{query_eval_prop}. 
      \end{itemize}
    \item[\(\inr \num n\)]
      In this case, \eqref{eq:emr1} reduces to:
      \[ \inr (\pair \num n \monUnit). \]
      By definition of \(\re^s\), this is an inner realizer for \(\fa\) if and only if 
      \[ \pair \num n \monUnit \]
      is an inner realizer for 
      \[ \qexists\lvb \lnot \lafa(t_1[\Omega], \dotsc, t_k[\Omega], \lvb). \] 
      Again by definition of \(\re^s\), this is the case if and only if 
      \[ \monUnit \re^s \lnot \lafa(t_1[\Omega], \dotsc, t_k[\Omega], \num n). \]
      Since \(\lnot \lafa(t_1[\Omega], \dotsc, t_k[\Omega], \num n) \) is defined as \( \lafa(t_1[\Omega], \dotsc, t_k[\Omega], \num n) \limply \lfalse \), 
      again by definition of \(\re^s\), we have to show that:
      \[ \monUnit u \monRe^s \lfalse, \] 
      for any inner realizer \(u\) of \( \lafa (t_1[\Omega], \dotsm, t_k[\Omega], \num n) \). 
      However, by \cref{query_prop}, \( \lafa (t_1[\Omega]\), \(\dotsm\), \(t_k[\Omega]\), \(\num n) \) does not hold, so there is no such \(u\). 
      Thus
      \[ \monUnit u \monRe^s \lfalse \] 
      holds vacuously. 
  \end{itemize}
\end{proof}

\begin{proof}[Proof of \Cref{thm:em_soundness}] 
  By definition of interactive realizability semantics, we have to prove that \(\Gamma \monSeq \mathcal{D}^* : \fa \) is valid with respect to the monadic realizability semantics induced by \(\monRe^s\) for any state \(s\). 
  So we fix a generic state \(s\) and proceed by induction on the structure of the decorated version of \(\mathcal{D}\), exactly as in \cref{thm:ha_soundness}, that is, we prove that each rule whose premisses are valid has a valid conclusion. 
  Since \(\monRe^s\) is a monadic realizability relation, this has already been shown in the proof of \cref{thm:ha_soundness} for all the rules in \(\HA\). 
  We only need to check the \(\EM\) axiom, but we have already done this in \cref{thm:realizer_for_em}. 
\end{proof}

\end{document}